\documentclass[english]{article}
\usepackage{lmodern}
\usepackage[T1]{fontenc}
\usepackage[utf8]{inputenc}%
\usepackage{babel,csquotes}
\usepackage[tracking=true, letterspace=50, expansion=false]{microtype}
\usepackage[margin=1in]{geometry}
\usepackage{setspace}
\usepackage{threeparttable}
\usepackage{booktabs}
\usepackage{multirow}
\usepackage{graphicx}
\usepackage{subcaption}

\usepackage[title]{appendix}
\usepackage{amsmath}
\usepackage{amsthm}
\newtheorem{lemma}{Lemma}
\DeclareMathOperator{\var}{var} %
\DeclareMathOperator{\tr}{tr} %
\DeclareMathOperator{\cov}{cov} %
\DeclareMathOperator{\diag}{diag} %

\newcommand{\overbar}[1]{\mkern 1.5mu\overline{\mkern-1.5mu#1\mkern-1.5mu}%
  \mkern 1.5mu}%
\usepackage{xcolor}

\usepackage{etoolbox} %
\AtBeginEnvironment{appendices}{%
  \crefalias{section}{appendix}
  \crefalias{subsection}{appendix}
  \counterwithin{figure}{section}
  \counterwithin{table}{section}
}
\setstretch{1.5}
\usepackage[pdfusetitle,%
  bookmarks=true,
  bookmarksnumbered=false,
  bookmarksopen=false,
  breaklinks=false,
  pdfborder={0 0 1},
  pdfborderstyle=,
  linkcolor={red!50!black},
  backref=false,
  colorlinks = true,%
  citecolor={black},
  urlcolor={blue!80!black}]{hyperref}
\usepackage{cleveref}
\crefname{appsec}{appendix}{appendices}

\usepackage[
bibencoding=utf8,%
maxbibnames=5, %
minbibnames=1, %
backend=biber,%
sortlocale=en_US,%
style=authoryear,%
uniquename=false,%
url=true,%
sortcites=false,
doi=true,%
eprint=true%
]{biblatex}
\addbibresource{refs.bib}

\title{Leniency Designs: An Operator's Manual\thanks{We thank Joan Farre-Mensa, Brigham Frandsen, Jeffrey Kling, Emily Leslie, Alexander Ljungqvist, Timothy Taylor, and Heidi Williams for comments. Aurel Rochell provided excellent research assistance.}}
\author{Paul Goldsmith-Pinkham, Peter Hull, and Michal Koles\'{a}r\thanks{Paul Goldsmith-Pinkham is Associate Professor of Finance, Yale School of Management, New Haven, Connecticut. Peter Hull is Professor of Economics, Brown University, Providence, Rhode Island. Michal Koles\'{a}r is Professor of Economics, Princeton University, Princeton, New Jersey.
Their email addresses are \texttt{paul.goldsmith-pinkham@yale.edu}, \texttt{peter\_hull@brown.edu}, and \texttt{mkolesar@princeton.edu}.}}
\date{\today}

\begin{document}
\maketitle

High-stakes decisions are often made by experts: doctors decide whether to treat patients, bail judges decide whether to release defendants before trial, patent examiners decide whether to grant patents to firms, child welfare investigators decide whether to place children in foster homes, and loan officers decide whether to approve consumer loan applications. Usually, even when groups of expert decision-makers agree in many situations, there will be close calls in which systematic disagreement emerges on the appropriate course of action: some decision-makers will tend to be systematically more \emph{lenient}, in that they tend to grant some treatment more often, while others will be more strict.  Furthermore, which decision-maker is assigned to a given case is often either deliberately random (to ensure fairness) or as-good-as-random, due to idiosyncrasies in the assignment process (such as rotations in shifts).

A \emph{leniency design}, also known as a judge or examiner instrument design,  harnesses such exogenous variation to estimate the causal effects of the high-stakes decisions by using a measure of the decision-makers' leniencies as an instrument for the treatment in an instrumental variables regression. Such designs have exploded in popularity in recent years. Prominent applications include \textcite{dobbie2015debt,dobbie2018effects,dobbie2021measuring, farre2020patent, autor2010temporary, doyle2007child, mullainathan2022diagnosing} (see Table 1 of \textcite{frandsen2023judging} for more examples).

Leniency designs rest on a straightforward idea: if decision-makers are randomly assigned, and if assignment only affects outcomes through the treatment decisions, dummies for assignment to different decision-makers are valid instruments. Specifically, we should be able to use the two-stage least squares estimator to estimate the treatment effect. This estimator aggregates the multiple dummy instruments into a single instrument, given by the fitted values from the first-stage regression of treatment on the assignment dummies. With no controls, the fitted values correspond simply to the sample treatment-granting propensity---or sample \emph{leniency}---of each decision-maker. The treatment effect estimate is then calculated as the ratio of the sample covariance of this leniency measure with the outcome relative to its covariance with the treatment.

While intuitive, this simple idea turns out to have many practical complications. The two-stage least squares estimator works well when there are only a few decision-makers. But typically there are many, which makes the sample leniency a noisy measure of its population analog. This noise causes the two-stage least squares estimator produce estimates that replicate some of the selection bias present in ordinary least squares estimates. What is a more robust way of aggregating the assignment dummies---a better leniency measure---that avoids such bias? Other practical questions abound: what controls should we include to ensure credible treatment effect estimates? Should we cluster standard errors, and if so, at what level? Does the fact that leniency is estimated matter for inference? And when can we be confident our estimates are meaningful when treatment effects are heterogeneous?

This paper develops a step-by-step guide to leniency designs that answers these and other questions, complementing a recent review by \textcite{chynexaminer}. We show how the unbiased jackknife instrumental variables estimator (UJIVE) of \textcite{kolesar13late} is purpose-built for leniency designs, leveraging leniency measures that can ensure bias-free causal estimates even in the presence of many decision-makers or controls. We further explain how one can interpret UJIVE estimates under treatment effect heterogeneity, using the local average treatment effect framework of \textcite{ImAn94}. UJIVE is not only useful for estimating treatment effects in this framework: building on \textcite{abadie02} and \textcite{kitagawa15}, we show how it can also be used to test a key identifying assumption of \emph{average monotonicity}.
Finally, we show how knowledge of the decision-maker assignment process---or ``design''---can guide the choice of controls and standard error calculations, and that non-clustered standard errors are often called for.

We conclude with a practical checklist for estimating treatment effects, assessing key assumptions, and probing external validity, all with a unified UJIVE approach. We illustrate this checklist with a re-analysis of \textcite{farre2020patent}, who use quasi-random patent examiner assignment to estimate the value of patents to startups.  We confirm their overall finding that patent-granting significantly increases future patent applications, approvals, and citations on both the extensive and intensive margins, though the magnitude and precision of these estimates are sensitive to using the more robust UJIVE estimator.

\section*{Motivating Example: Estimating the Value of Patents}

We start with the simplest possible version of a leniency design, with
two decision-makers who are completely randomly assigned. Concretely, consider a
stylized version of the setting in \textcite{farre2020patent}, who estimate the causal effect of patent-granting to startup firms on the later inventiveness of those firms. We observe a set of $n$ applications, indexed by $i$, which are
submitted by firms to the US Patent Office. The applications are then assigned
by a random coin flip to one of two examiners, $s$ and $t$. The examiners decide
whether to grant the patent, as indicated by the dummy variable $x_i\in\{0,1\}$.
We refer to $x_i$ as \emph{treatment}, following standard causal inference
lingo. An \emph{outcome} $y_i$ is then realized; for concreteness, suppose $y_i$
measures the number of future patent applications filed by the firm over some
time period (e.g., within five years of submitting the application).

We want to leverage the randomness in examiner assignment to estimate the causal effect of $x_i$ on $y_i$. For now, we assume this effect is the same across all firms: i.e., that being granted a patent causes each firm to submit $\beta$ additional patent applications, compared to being denied a patent. In practice, it is likely that the effects of patent-granting are heterogeneous (i.e., different for different firms); we allow for this possibility later. Ignoring such heterogeneity for now, we write a constant-effects \emph{outcome equation} for $y_i$:
\begin{equation*}
y_i = \gamma + \beta x_i + \varepsilon_i,
\end{equation*}
where the intercept $\gamma$ represents the average number of future patent applications submitted when a firm is denied a patent. The actual number of future patent applications varies across firms due to various unobservable factors captured by the error term $\varepsilon_{i}$.

The core identification challenge here is that estimation of $\beta$ by simple ordinary least squares is likely to suffer from \emph{selection bias}: startups that would produce many patents in the future even if their current application were denied (because, say, they have higher research and development budgets or employ better scientists) may submit stronger applications that are more likely to be granted. This results in a positive correlation between the error term $\varepsilon_i$ and the treatment $x_i$. Consequently, an ordinary least squares regression of the outcome $y_i$ on the treatment $x_i$ will tend to overstate the true causal effect $\beta$.

A leniency design addresses selection bias by leveraging differences in the randomly assigned examiners' tendency to grant patents, isolating variation in the treatment that is unrelated to $\varepsilon_{i}$. Suppose examiner $t$ is ``tough'' (less likely to grant a patent) while examiner $s$ is ``soft'' (more likely to grant a patent). In other words, the overall patent-granting rate of examiner $s$ in the population of patent applications, which we refer to as their \emph{leniency}, is larger than that of examiner $t$: ${p}_s>{p}_t$. Let $\hat{p}_s$ and $\hat{p}_t$ be estimates of these rates: i.e., the fraction of applications we observe each examiner granting in the data. Further, let $z_i$ be a dummy variable for assignment to the soft examiner, so it equals $1$ if application $i$ is handled by examiner $s$ and $0$ if it is handled by $t$. The estimated leniency of the examiner assigned to $i$ is then given by $\hat{\ell}_i=\hat{p}_{t}+(\hat{p}_{s}-\hat{p}_{t})z_i$. A leniency design uses estimated leniency as an \emph{instrument} for patent-granting $x_i$. Specifically, we estimate $\beta$ by an instrumental variable regression, using $\hat{\ell}_i$ to instrument for $x_i$ in the outcome equation.

To gain some intuition as to why this instrumental variable regression can work, note that in this simplified setting---with only two randomly assigned examiners---it is equivalent to a simpler instrumental variable regression which uses the examiner assignment dummy $z_i$ directly as an instrument. In other words, it does not matter whether we use the estimated leniency $\hat{\ell}_i$ as an instrument or the dummy $z_i$: the resulting estimates of $\beta$ are identical. This is because the estimated leniency is just a version of $z_i$ that is scaled by $\hat{p}_{s}-\hat{p}_{t}$ and shifted by $\hat{p}_{t}$, and such instrument scaling and shifting leaves estimates unchanged.

It is easy to see why the equivalent $z_i$-instrumented specification is valid: $z_i$ is randomly assigned, like treatment assignments in a simple randomized controlled trial. Thus, so long as examiner assignment only affects the outcome $y_i$ through the patent-granting treatment $x_i$---the usual \emph{exclusion restriction} for instrumental variable analyses---$z_i$ will be independent of $\varepsilon_{i}$ and hence be a valid instrument for estimating $\beta$.\footnote{Consistency of these estimates also requires a relevance condition: that $z_i$ and $x_i$ have non-zero correlation. Here relevance holds because we assumed that the examiners vary in their leniency, $p_s>p_t$. This means that the population first-stage coefficient from regressing $x_i$ on $z_i$, given by $p_s-p_t$, is non-zero.} Exclusion seems reasonable to assume here, unless we thought examiners did other things besides ruling on the patent (such as giving advice to the firms).

Forming valid standard errors is also easy in this example. Because $z_i$ is randomly assigned by a coin flip for each application, it is uncorrelated across applications. It follows that conventional heteroskedasticity-robust standard errors are appropriate for quantifying uncertainty in the estimate. There is no need to cluster standard errors, analogous to how clustering is not needed in simple randomized controlled trials.

Real-world leniency designs are more complicated than this stylized example, often with many decision-makers who are not assigned completely at random. Handling these features requires some modifications, as we next discuss. Still, the core equivalence between using estimated leniency as an instrument versus using examiner dummy instruments directly will be a useful guide to thinking through estimation and inference in these more complex leniency designs.

\section*{Estimation with Many Decision-Makers and Controls}

The assignment of decision-makers is rarely completely random, but institutional knowledge can sometimes imply it is as-good-as-random once we condition on an appropriate set of control variables. We refer to these as \emph{necessary controls}, as they must be included in every specification leveraging exogenous assignment. As we discuss more below, patent applications in the US are first classified depending on the technology being patented. This determines which group of specialist examiners---the so-called ``art unit''---will review the application. Assignment within art units is then effectively random, conditional on the set of examiners working in the art unit at the time of the assignment. Assuming this set changes slowly over time, a researcher may take art unit-by-cohort fixed effects as the set of necessary controls.\footnote{\label{fn:precision_controls}One may optionally include \emph{precision controls} in some specifications: pre-assignment characteristics help explain variation in the outcome. For example, in the application below, one could include the number of independent claims in the patent application or class group fixed effects. Including these variables will soak up some variability in the error $\varepsilon_{i}$ and thus help reduce the standard errors. This is analogous to including baseline characteristics as controls in regression specifications estimating effects of treatments in randomized controlled trials, while including the necessary controls is analogous to controlling for strata indicators in randomized trials where the treatment assignment probability varies across strata.}

In a realistic setting with many examiners, assignment is now captured by a vector $z_i$ of $K$ instruments: one for each examiner, with one examiner omitted in each art unit to prevent multicollinearity. This leads to a \emph{first-stage equation}, a population regression specification linking the treatment, instrument, and controls:
\begin{equation*}
    x_i = z_i'\pi + w_i'\delta + \nu_i,
\end{equation*}
where the vector $w_i$ comprises art unit-by-cohort dummies and $\nu_i$ is a
regression residual. In the special case where we just have one cohort, so that
$w_i$ reduces to art unit fixed effects, the first-stage coefficients
have a simple interpretation: $\delta_j$ is the overall leniency of the omitted
reference examiner in art unit $j$, and $\pi_k$ measures the difference
between the leniency of examiner $k$ and the reference examiner in $k$'s
art unit.

If we knew the first-stage coefficients $\pi$ and $\delta$, we could use the population first-stage fitted values $\ell_{i} = z_i'\pi + w_i'\delta$ as our leniency measure. By usual partialling-out logic (i.e., the Frisch-Waugh-Lovell theorem), an instrumental variable (IV) regression using this instrument while controlling for $w_i$ is equivalent to running an IV regression without any controls, but instrumenting with the residual from projecting the fitted values onto the covariates, $\tilde{\ell}_i=\tilde{z}_i'{\pi}$. Here $\tilde{z}_i$ denotes the sample residual from regressing the instrument vector $z_i$ onto the controls $w_i$. This leads to the estimator:
\begin{equation*}
   \hat{\beta}^*=\frac{\frac{1}{n}\sum_i y_i\tilde\ell_i}{\frac{1}{n}\sum_i x_i\tilde\ell_i},
\end{equation*}
the ratio of sample covariances (because $\frac{1}{n}\sum_i \tilde{\ell}_i=0$ by construction of the sample residuals $\tilde{\ell}_i$) between \emph{relative leniency} $\tilde{\ell}_i$ and the outcome $y_i$ versus the treatment $x_i$. We use the term relative leniency to stress that $\tilde{\ell}_i$ measures the leniency of examiner handling application $i$ \emph{relative} to other examiners who could have handled the application. In contrast, $\ell_i$ is a measure of \emph{absolute leniency}. With just one cohort, these measures take a simple form: $\ell_{i}$ is the overall patent-granting propensity of $i$'s examiner, while $\tilde{\ell}_i$ is the examiner's patent-granting rate minus the overall patent-granting rate of $i$'s art unit.

The estimator $\hat\beta^*$ is approximately unbiased for $\beta$, because $\tilde{\ell}_i$ is a valid instrument when examiner assignment is as-good-as-random given the controls $w_i$; \Cref{sec:derivations} gives the formal derivation of this claim along with other results below. In practice, however, it is infeasible to use this true relative leniency measure as an instrument because we do not know the first-stage coefficients. A tempting alternative is to simply use least squares estimates of the first-stage coefficients, $\hat{\pi}$ and $\hat{\delta}$, in place of the unknown population values. This is exactly equivalent to using a two-stage least squares (2SLS) estimator which instruments with the full set of assignment dummies $z_i$. To see this, recall that the 2SLS estimator is equivalent to an instrumental variables estimator that uses a single instrument given by the sample first-stage fitted values, $z_i'\hat{\pi}+w_i'\hat\delta$. By the same partialling-out logic that led to the equation for $\hat{\beta}^*$, this is in turn equivalent to an IV regression without covariates that replaces the population relative leniency $\tilde\ell_i$ in the formula for $\hat\beta^*$ with the sample analog $\hat{\ell}_i=\tilde{z}_i'\hat{\pi}$ (the residual from projecting the fitted values onto the controls).

However, it turns out using the estimated relative leniency $\hat{\ell}_i$ will tend to produce biased treatment effect estimates when there are many examiners. This is the classic many-weak instrument bias problem for 2SLS \parencite[e.g.,][]{bekker94,bound1995problems}; it arises here because $\hat{\ell}_i$ is constructed in part from the treatment status of observation $i$ through the least squares estimate $\hat{\pi}$. With just one cohort, for instance, $\hat{\ell}_i$ is given by the fraction of patents granted by the examiner assigned to $i$ in the sample minus the fraction of patents granted by the art unit handling the application---and both fractions are computed including the data from application $i$. But applicant $i$'s treatment likely correlates with the outcome error $\varepsilon_i$; this is, after all, the reason to use an instrumental variables instead of an ordinary least squares regression. Thus, the estimated leniency instrument is also likely correlated with the outcome error, and this generates bias in the instrumental variable estimates in the direction of the naïve ordinary least squares estimates.

The bias from using the estimated leniency instrument can be severe, especially when examiner assignment only modestly affects the probability of treatment.  A helpful rule of thumb for gauging the magnitude of the bias obtains when we assume the errors are homoskedastic; then, the 2SLS bias approximately equals the bias of ordinary least squares divided by $ (1-R^2)\times E[F]$, where $E[F]$ is expectation of the first-stage $F$ statistic for the hypothesis that the first-stage instruments are irrelevant (i.e., that $\pi=0$ in the first stage equation), and $R^2$ is the population partial R-squared from adding instruments to the first-stage regression. In practice, since $R^2$ tends to be near zero, the magnitude of $E[F]$ tells us how much smaller 2SLS bias is relative to ordinary least squares (it's never larger, since $E[F]$ always exceeds one). In fact, this relationship can be used to justify the popular $F>10$ rule of thumb proposed by \textcite{staiger1997instrumental} for identifying weak instruments. If the expectation of the first-stage $F$ statistic lies below 10, then 2SLS bias exceeds 10\% of least squares bias, so the rule of thumb can be thought of as a diagnostic for whether we are in this large bias region.\footnote{\Textcite{StYo05} construct a formal statistical test based on $F$ for the null hypothesis that the 2SLS bias is large in the sense of potentially exceeding 10\% of least squares bias, under homoskedastic errors. While the precise critical value depends on $K$, it is close to 10 for most values of $K$: see their Table 5.1.} With many examiners, the first-stage $F$ statistic can be modest even when examiner leniency explains economically meaningful variation in the treatment because the formula for $F$ divides by $K$.

Knowing that the 2SLS bias comes from using own treatment status to estimate the
relative leniency suggests a natural bias-free alternative: instrument with a
leniency estimate $\hat{\ell}_{-i}$ that \emph{leaves out} observation $i$'s own
value of the treatment, thereby avoiding a mechanical correlation with
$\varepsilon_i$. The unbiased jackknife instrumental variable estimator (UJIVE),
studied in \textcite{kolesar13late}, implements this logic by setting the
relative leniency estimate to $\hat{\ell}_{-i}=\tilde{z}_i^\prime\hat\pi_{-i}$
where $\tilde{z}_i$ are the same instrument residuals as before and
$\hat\pi_{-i}$ is a least-squares estimate of $\pi$ which uses all observations
except for $i$. Because this leave-one-out---or ``jackknifed''---estimate is
unbiased for $\pi$ and, in \emph{iid} data, independent of the data for
observation $i$, the same arguments used to show approximate unbiasedness of the
infeasible estimator $\hat\beta^*$ can also be used to show approximate
unbiasedness of the UJIVE estimator. Moreover, as shown in  \Cref{sec:derivations},
UJIVE is simple to run: the vector of
relative leniencies can be calculated in a single step, using some matrix algebra tricks; we do not need to run $n$ separate leave-out regressions to get the $\hat\pi_{-i}$'s.

The unbiasedness property of UJIVE hinges on the fact that it uses leave-out
estimation to directly estimate relative leniency $\tilde{\ell}_i$ accounting
for controls through the residualized $\tilde{z}_i$. In contrast, leave-out
estimation of absolute leniency $\ell_i$ tends to work poorly when the number of
covariates is large. In particular, \textcite{PhHa77,aik99} propose the
jackknife IV estimator (JIVE) that constructs a leave-out estimate of the
absolute leniency as an instrument in a regression with covariates
(\textcite{aik99} term this estimator JIVE1; the related JIVE2 estimator has
similarly poor performance with many covariates). By the same partialling-out logic as before, this is equivalent to using 2SLS without controls, and using as an
instrument the least squares residuals from a regression of the leave-out
absolute leniency estimates on the controls. Even though the JIVE absolute leniency
estimates are by construction uncorrelated with own treatment, the covariate
adjustment reintroduces own-observation bias because the residuals depend on
absolute leniency estimates for \emph{other} observations---which \emph{are}
constructed using one's own treatment status.

To see this problem of classic JIVE simply, consider the case of just one cohort so the covariates consist of
art unit fixed effects. The JIVE absolute leniency estimate of application $i$
is then the average leave-out patent-granting rate of examiner assigned to $i$. We
 regress these on art-unit fixed effects using ordinary least squares, and
take the residuals. The fitted value is the average of the leave-out
patent-granting rates in the art unit assigned to $i$. Crucially, because
leave-out patent-granting rates for all other cases handled by $i$'s examiner
depend on the treatment of application $i$, the average of the leave-out
rates is the same as the overall (non-leave-out) average rate of
the art unit (the average leniency of all examiners in the art unit).

Specifically, consider an examiner $j$ working in a given art unit, who handles
$n_{j}$ cases $\mathcal{J}$. The sum of the leave-out rates for this examiner is
$\sum_{i\in\mathcal{J}}\left(\frac{1}{n_{j}-1}(\sum_{i'\in\mathcal{J}}x_{i'}-x_{i})\right)=
\frac{1}{n_{j}-1}\left(n_{j}\sum_{i'\in\mathcal{J}}x_{i'}-\sum_{i\in\mathcal{J}}x_{i}\right)=\sum_{i\in\mathcal{J}}x_{i}$---just
the overall number of cases granted by $j$. Consequently, summing leave-out rates
for all examiners working in the art unit and dividing by the number of total
cases gives the average patent-granting rate of the art unit. Thus, the
residualized JIVE leniency takes the leave-out patent-granting rate of examiner
assigned to $i$, and subtracts the overall patent-granting rate of the art unit,
which depends on the treatment status of application $i$. This reintroduces the
own-observation bias of 2SLS that we were trying to get rid of, except that it
typically runs in the opposite direction to 2SLS (since the own-observation
component is now only the term being subtracted off, if $i$ is treated, this
\emph{decreases} JIVE's relative leniency measure). Under homoskedasticity, the
JIVE bias approximately equals that of 2SLS multiplied by
$-E[F] \times L/((E[F]-1)K-L)$ where $L$ is the number of controls. The JIVE
bias is negligible when the number of controls is much smaller than the number
of instruments (in fact, UJIVE and JIVE coincide in the absence of controls and
a constant). But when many controls (e.g., art unit-by-cohort fixed effects) are
needed to ensure as-good-as-random assignment, this
formula shows that the magnitude of JIVE bias can be comparable to that of
2SLS.

Two more comments  are warranted here. First, it is common practice to
report the first-stage $F$ statistic with 2SLS estimates; as discussed, its
expectation $E[F]$ informs 2SLS bias. But small first-stage $F$
statistics need not worry a researcher using UJIVE\@: the arguments for its
approximate unbiasedness work even if $E[F]$ is small. In fact, UJIVE remains
approximately unbiased and consistent even when the instruments are weak enough
that $E[F]$ converges to one in large samples, so long as $\sqrt{K}\times (E[F]-1)$
is large \parencite[see, e.g.,][]{MiSu22}. Second, the leave-out estimation
approach leveraged by UJIVE assumes independence across observations. When
observations are instead clustered together---a scenario we detail when
discussing the calculation of standard errors---a leave-own-cluster-out
modification of UJIVE may be needed to ensure unbiasedness
\parencite{frandsen2025cluster,kmwz26}. Intuitively, correlations between
treatment $x_i$ and errors $\varepsilon_j$ for observations $i$ and $j$ in the
same cluster will generally reintroduce the mechanical bias between the
leave-own-observation-out UJIVE instrument and the errors, again tending to bias
estimates towards ordinary least squares. Hence, for examiner designs,
clustering in the data is not just a consideration for inference but also guides the choice of estimator.

\subsection*{Alternatives to UJIVE}

While UJIVE is a natural solution to the 2SLS bias problem, there are other
reasonable approaches. \Textcite{AcDe09} propose a clever modification of
JIVE---the improved jackknife IV estimator (IJIVE)---which can greatly reduce
its bias in the presence of controls by reversing the order of operations:
\emph{first} residualize the instruments and \emph{then} compute a leniency
measure based on leave-out fitted values. While reversing the order of
operations doesn't entirely eliminate the own-observation bias (because the
initial residualization is not leave-one-out), in practice IJIVE and UJIVE tend
to produce similar estimates. More recently, \textcite{csw23} propose an
alternative to UJIVE, termed FEJIV\@. We show in the appendix that FEJIV can be
interpreted as solving for a relative leniency measure with the smallest mean
squared error under homoskedasticity, subject to two constraints: that the resulting
estimator is free of own-observation bias and that the leniency measure is
orthogonal to the covariates. The UJIVE leniency measure, in contrast, achieves
the minimal mean squared error without the latter orthogonality constraint. The
orthogonality property of FEJIV is attractive in that adding a linear function
of covariates to the outcome does not affect the estimate; the price for this is
that FEJIV leniency is slightly noisier. The resulting estimator also tends to
be computationally demanding in large datasets, and imposes stronger data
requirements (e.g., the estimator may not exist if there are high-leverage
observations).

Another approach is to bias-correct 2SLS, which again replaces the infeasible instrument $\tilde{\ell}_i$ in the formula for $\beta^*$ with the estimated relative leniency $\hat{\ell}_i$. When 2SLS does this, it increases the expectation of the denominator from $\frac{1}{n}\sum_i\tilde{\ell}_i^2$ to $\frac{1}{n}\sum_i\tilde{\ell}_i^2+K\operatorname{var}(\nu_i)/n$ under homoskedasticity. The quantity $\sum_i\tilde{\ell}_i^2$ is the numerator of the population partial R-squared statistic; it measures the increase in the explained sum of squares from adding instruments to the first stage. Since 2SLS uses in-sample fitted values to estimate the predictive power of the instruments, it overstates this predictive power---exactly analogous to how the unadjusted R-squared overstates the predictive power of regression (the same issue happens in the numerator, and taking the ratio gives the rule-of-thumb bias formula for 2SLS). Paralleling how the adjusted R-squared fixes this issue with a degrees of freedom correction, we can use degrees of freedom adjustments in both the denominator and the numerator of the 2SLS formula. Unfortunately, as with the adjusted R-squared formula, the resulting bias-corrected 2SLS estimator (due to \textcite{nagar1959bias}) only works under homoskedasticity.

In practice, rather than using UJIVE or its cousins that compute the leniency
measures internally, it is common for researchers to first construct an
``external'' leniency measure and then use it as an instrument in a
just-identified IV regression. This is analogous to first computing first-stage
fitted values and then using them as a single instrument, rather than directly
using the standard 2SLS estimator which estimates everything in one step. Such
``manual 2SLS'' procedures are widely advised against \parencite[see,
e.g.,][Chapter 4.2]{angrist2009mostly} and we would similarly advise against
manual leniency IV estimation.\footnote{Likewise, we would advise against interpreting the variance of a manual (or UJIVE-derived) leniency measure as a signal of the design strength as is sometimes done in practice. For one thing, noise in estimation will tend to overstate the true variability in leniency. Properly computed UJIVE standard errors, discussed below, are sufficient for gauging the design's power.} Indeed, as shown above, subtle variations in how
the leniency measure is exactly constructed and accounts for covariates can have
potentially large consequences for the bias of the ultimate estimator. Also,
just as manual 2SLS implementations lead to incorrect second-stage standard
errors, so do manual leniency constructions. It is simpler and safer to use
a one-step implementation like UJIVE, with established unbiasedness properties.\footnote{One justification for such manual leniency measure constructions is that the
researcher faces missing data: there is a large sample of
observations for the first-stage regression, but only a smaller subset with outcomes to estimate effects on. This scenario is easily handled with the UJIVE
approach, which simply computes the leave-out estimate $\hat{\pi}_{-i}$ on the
full sample. Relative to dropping the missing data, this will increase
efficiency with many examiners when the true first-stage coefficients match in the samples with and without outcome
data. Otherwise, dropping the missing data may be preferred.}

\section*{Heterogeneous Treatment Effects}

So far, we have assumed the treatment parameter $\beta$ is constant. This can of course be a strong assumption in practice. It means, for example, that any successful patent application raises the future innovativeness of all startups $i$ by the same constant amount. In reality, the value of patents is likely higher for some startups than for others. Luckily, the UJIVE estimator can retain a causal interpretation even in the presence of such heterogeneous effects---i.e., when patent effects $\beta_i$ vary arbitrarily across firms $i$.\footnote{In this section only, we assume the treatment is binary. This is mostly for notational convenience, since otherwise effect heterogeneity could come both from differences across observations and from different margins of treatment response for a given observation. See \textcite[Appendix A.2]{kp25} for an extension of the local average treatment effect theorem to the case with multivalued or continuous treatment.}

\subsection*{First-Stage Monotonicity}

The local average treatment effect theorem of \textcite{ImAn94}, recognized in the 2021 Nobel Prize, offers a guide to the general heterogeneous-effect interpretation of leniency designs. The theorem maintains the assumption of as-good-as-random decision-maker assignment and the IV exclusion restriction, while replacing the restriction of constant treatment effects with an assumption of first-stage \emph{monotonicity}.

In the patent setting, monotonicity means that if we can find a case that some examiner $a$ would grant a patent to but some other examiner $b$ would deny, it must be that $a$ is more lenient in \emph{all} cases; there cannot be another case that $a$ would deny but $b$ would grant. In the simple example with just two examiners, this implies we can divide the universe of firms into two groups: a \emph{complier} group with patent applications that are granted when assigned to the soft examiner $s$ but not when assigned to the tough examiner $t$, and a non-responder group whose treatment status is unaffected by examiner assignment (they are always either granted or denied, regardless of which examiner handles the case). Monotonicity rules out the presence of a third group of \emph{defier} firms that are granted their applications only if assigned to $t$. The local average treatment effect theorem states that, in the absence of defiers, the IV regression using an indicator variable for assignment to the soft examiner as an instrument estimates the average treatment effect for compliers. \Textcite{ImAn94} call this a local average treatment effect, to stress that we learn nothing about treatment effects for non-responders.
In contrast, if we ran a randomized controlled trial that granted applications by a coin flip, we would learn the overall average treatment effect.

With many as-good-as-randomly assigned decision-makers, monotonicity implies there are many complier groups (one for each pair of decision-makers). An extension of the theorem shows that using the relative leniency $\tilde{\ell}_i$ as an instrument---or the unbiased leniency measure constructed by UJIVE---identifies a weighted average of complier-group specific treatment effects, weighted using the squared leniency differences of each decision-maker pair (\Cref{sec:derivations} gives the formal result). This follows because, in the population, leniency IV is equivalent to running a series of simple IV regressions with a single binary instrument, that restricts the sample to a given decision-maker pair, then averaging these IV regressions together using squared pairwise leniency differences as weights. Since a leniency difference measures the share of compliers, leniency IV thus weights by the square of the group size. Hence, larger complier groups receive more weight. But importantly the weighting is convex: no complier groups are given negative weights.

\subsection*{Weakening Monotonicity}

While it is likely more palatable than assuming constant treatment effects, monotonicity is nevertheless a strong assumption. In particular, results in \textcite{vytlacil02} imply it is equivalent to assuming each patent examiner has the same ranking of the  patent applications' merits, with examiners only differing in the cutoff they use for gauging whether an application is above the bar. The strength of this assumption for leniency designs was, in fact, first noted in the original \textcite{ImAn94} analysis (see their Example 2, p.~472). There are two core issues. First, in many settings like our patent example, there are multiple dimensions to decision-makers' evaluation criteria which may make it unlikely that any two decision-makers would have the same ranking. For instance, patent applications are evaluated by their perceived usefulness, novelty, and non-obviousness, and different examiners likely place different weights on these criteria. Second, even if the weights were the same, variation in skill among decision-makers can induce ranking differences due to mistakes by less-skilled decision-makers. For instance, \Textcite{cgy22} show that in the case of radiologists, variation in skill accounts for about 40\% of variation in their leniency.

Fortunately, the usual first-stage monotonicity condition can be weakened in three ways. To see how, first consider what IV identifies without monotonicity. In general, a simple IV regression restricted to a pair of decision-makers identifies a weighted \emph{difference} between complier and defier treatment effects with weights given by the fraction of compliers and defiers (because defier treatments are shifted in the opposite direction by decision-maker assignment, relative to the compliers). Since leniency IV averages these pairwise IV regressions, it can be seen as identifying a weighted-average treatment effect for compliers and defiers, but with negative weight put on all defier groups. This is a problem for its causal interpretation, in general. For example, we risk ``sign reversals'': even if patent-granting, say, uniformly increases future firm productivity so that $\beta_i$ is positive for all $i$, the leniency IV could estimate a negative effect---namely, if the treatment effect for defiers is much larger than that for compliers. But while the presence of defiers necessarily causes a negative weighting problem in the case with a single pair of examiners, there is a subtlety with multiple examiners. Since each firm appears in multiple pairwise examiner comparisons, the total weight we place on each firm corresponds to its average complier, or defier, status across all pairwise comparisons. 

This logic suggests the first way the usual monotonicity assumption can be weakened: as long as no firm is a defier ``on average'' the weights placed on individual firms remain positive. Equivalently, one may assume that the average leniency of the examiners who would grant the firm a patent exceeds the average leniency of those who would deny it---a condition \textcite{frandsen2023judging} term ``average monotonicity.''\footnote{While the statement of the condition in \textcite{frandsen2023judging} is more complex, we show in the appendix that our formulation is equivalent. With two decision-makers, average monotonicity reduces to the usual monotonicity condition.}

The second way comes from seeing that even if some weights are negative, it only presents a problem when the patent applications with negative weights systematically differ in their treatment effects. One may be willing to assume this is not the case.\footnote{\label{fn:selection_on_gains}\Textcite{heckman2006understanding} call this assumption no ``essential'' treatment effect heterogeneity. In terms of the \textcite{roy1951some} selection model, it imposes no ``selection-on-gains.''} More generally, \textcite{dechaisemartin17} points out that if one finds a subset of positively weighted applications that matches the treatment effects of the firms that are defiers on average, these compliers cancel out the negative treatment weights on the defiers; we can then interpret the IV as estimating a convex weighted average of treatment effects for the remaining compliers.

The third point is that even if the negatively weighted observations differ systematically, this need not bias IV estimates substantively unless there are many firms with non-negligibly negative weights and the systematic differences in treatment effects are very large. We never see two patent examiners evaluate the same case, so we cannot directly assess the common ranking assumption. But we do actually have some evidence in the context of judge assignment, because some court cases are assigned to a panel of judges. There \Textcite{sigstad_monotonicity_2025}  shows that while monotonicity is technically often violated in judicial panels, the ranking disagreements are not severe enough to create substantial bias in leniency IV estimates.

\subsection*{Testing Monotonicity}

Typically, leniency IV designs do not feature multiple decision-makers being assigned to the same case, precluding such direct monotonicity tests in other contexts. But we can still indirectly test the assumption. One implication of monotonicity is that the average outcomes for cases assigned to two decision-makers with the same leniency must be the same in the population, because under monotonicity, the two decision-makers' decisions must exactly match. More generally, if the leniencies are similar, the average outcomes must also be similar since the number of observations on which the decision-makers disagree must be small.
\Textcite{frandsen2023judging} formalize this idea in a statistical test.\footnote{Strictly speaking, both this test and the average monotonicity test described below are of the joint null hypothesis of as-good-as-random assignment, exclusion, and monotonicity, as a violation of any assumption could drive a test rejection.} While useful in some settings, the test has two limitations. First, to show its validity, \textcite{frandsen2023judging} rule out a large number of decision-makers (i.e., the original motivation for using UJIVE), and its implementation requires bounded outcomes (ruling out, e.g., looking at patent effects on future startup profits). Second, it tests the original \textcite{ImAn94} monotonicity condition, not the weaker average monotonicity condition that is necessary and sufficient for nonnegative weights.

To address both limitations, we propose here a test for average monotonicity---based on an earlier proposal by \textcite{kitagawa15}---which leverages the UJIVE estimator.\footnote{An alternative informal test discussed in \textcite{frandsen2023judging}, and used previously in the literature \parencite[e.g.][]{dobbie2015debt,dobbie2017consumer}, is to check that leniency rankings of decision-makers are stable across subsamples defined by baseline characteristics. For example, subgroup regressions of treatment on relative leniency should yield a positive coefficient in each subgroup. Establishing and comparing the power of such tests to our proposal is an interesting area for future research.}
We develop our proposed monotonicity test by first noting, following \textcite{abadie02}, that the UJIVE estimator can not only estimate average treatment effects for compliers; it can also \emph{characterize} compliers by their baseline characteristics and potential outcomes. Formally, consider a UJIVE estimator with the same treatment, instruments, and controls as before, but with a modified outcome. Rather than $y_i$, we put on the left-hand side $\tilde{y}_i=v_i\times x_i$, where $v_i$ equals some variable determined before the as-good-as-random assignment of $z_i$. The local average treatment effect theorem applies to this modified outcome as well, showing that under average monotonicity, the UJIVE estimator identifies a convex weighted average of treatment effects of $x_i$ on $\tilde{y}_i$. But by construction these ``effects'' are just $v_i$, since $\tilde{y}_i$ is moved by exactly this amount when $x_i$ is counterfactually increased by one unit. Moreover, the weights that UJIVE puts on these effects are exactly the same as the ones in the original treatment effect specification. Hence, by simply replacing $y_i$ with $\tilde{y}_i$, we can easily compute weighted averages of $v_i$ with the UJIVE weights.

To see why this result is useful for testing monotonicity, note that it is possible to obtain a logically invalid estimate when computing such weighted averages. If $v_i$ is a binary variable, which can only take on values zero or one, then any convex weighted average of $v_i$ must not be negative or greater than one. Such a finding would suggest something has gone wrong in the local average treatment effect theorem, namely monotonicity (assuming we are confident in as-good-as-random assignment and exclusion). More generally, we can set $v_i$ to be an indicator that some baseline variable---or set of variables---takes on a particular set of values and check that the resulting UJIVE estimate with that $\tilde{y}_i$ as the outcome lies between zero and one. For binary outcomes, we can even use the original $y_i$ (by setting $\tilde{y}_i=y_i\times x_i$) since then UJIVE will estimate a weighted average of \emph{treated outcomes} \parencite{abadie02}; weighted averages of untreated outcomes can be estimated by setting $\tilde{y}_{i}=y_i\times (x_i-1)$.
Of course, monotonicity tests like these are as straightforward to implement via UJIVE, and they inherit its approximate unbiasedness property even with many decision-makers or controls.

In practice, this test will likely only detect gross violations of monotonicity: to move a given weighted average beyond the logical $[0,1]$ bounds we need on-average defiers to be both prevalent enough and different enough from those who are compliers on average. Relative to tests for stronger versions of monotonicity (such as in \textcite{frandsen2023judging}  or the \textcite{chmw24} version of \textcite{kitagawa15}), this test's rejections raise more meaningful concerns for a design's validity via the potential for sign reversals.

\bigskip

We close this section with two additional comments. First, one might wish to use the above method to estimate average baseline characteristics of compliers even when monotonicity is not a concern. This helps evaluate another more subtle concern with IV estimation given heterogeneous treatment effects: the \emph{external validity}, or representativeness, of the estimated complier-average treatment effect. By replacing the $y_i$ outcome with $\tilde{y}_i=v_i\times x_i$ for some baseline characteristic $v_i$, researchers can directly evaluate whether observable complier characteristics are representative of the broader study population.\footnote{Another complementary way of gauging external validity is to report how complier-average treatment effects, computed by simple IV regressions restricted to a given examiner pair, vary with the pairs' leniencies. Ordering the examiners by their leniency and reporting the estimates for pairs of neighboring examiners then gives an estimate of the marginal treatment effect curve \parencite{HeVy05}. Intuitively, a flat curve suggests that effects for non-responders are likely similar to the UJIVE estimate. However, the properties of such procedures---or their more sophisticated versions \parencite[e.g.,][]{mst18}---have not to our knowledge been formalized in the case of many decision-makers or controls.}

Second, a further issue can arise when the specification of controls in $w_i$ is insufficiently flexible. To avoid omitted variable bias with instruments that are only conditionally as-good-as-randomly assigned, the covariates $w_{i}$ need to enter flexibly enough so that the conditional mean of the instrument vector given $w_i$, $E[z_i\mid w_i]$, is linear in the covariates (\textcite{kolesar13late} shows this condition is sufficient, while \textcite{blandhol2022tsls} show it is necessary). This condition holds automatically when $w_i$ is a set of fixed effects, but if other or multiple sets of controls are needed it may require adding interactions and higher-order terms. Further flexibility may then be needed for average monotonicity to hold. As we show in \Cref{sec:derivations} (\Cref{lemma:lambda_weights}), the average monotonicity condition applies to the population leniency measure based on the first-stage equation. If, say, there are examiners working in multiple art units with art-unit-specific leniencies then a first stage that only controls for art-unit-by-cohort fixed effects additively will be misspecified. This misspecification could cause average monotonicity to fail unless we run a first stage that interacts the examiner assignment with art-unit-by-cohort fixed effects \parencite[like the specification considered in][]{angrist1991does}. However, as discussed above, even if uninteracted control specifications could yield negative weights, they need not generate meaningful bias---and will typically yield more precise estimates. In practice, researchers can explore this potential bias-variance tradeoff by checking robustness to more flexible control specifications.

\section*{Standard Errors}

Leniency designs with multiple decision-makers can also raise subtle standard error issues. To see why, we start with a benchmark: suppose we knew the relative leniency $\tilde{\ell}_i$, and hence could compute the infeasible estimator $\hat{\beta}^*$. With \emph{iid} data, the correct standard errors are given by the square root of $\sum_i\hat{\varepsilon}_i^2 \tilde{\ell}_i^2/(\sum_i \tilde{\ell}_i^2)^2$, where $\hat{\varepsilon}_i$ is the residual from projecting $y_i-x_i\hat{\beta}^*$ on the covariates. Since the covariances in the numerator and denominator of $\hat\beta^*$ are approximately normally distributed by the central limit theorem, their ratio is also approximately normal with this standard error. This follows by the delta method, a classic result in asymptotic statistics.
Standard software packages compute heteroskedasticity-robust standard errors for 2SLS using the feasible version of this formula, plugging in $\hat{\ell}_{i}$ for  $\tilde{\ell}_i$ and the 2SLS estimate for $\hat{\beta}^*$. These standard errors are correct when treatment effects are homogeneous and the many instrument 2SLS bias is negligible. Under these conditions, 2SLS is as efficient as the infeasible estimator $\hat\beta^*$.

Unfortunately, with many instruments, the same issue that causes bias in the 2SLS estimator also causes its standard errors to be misleadingly small. Recall from our discussion of bias-corrected 2SLS that when 2SLS plugs in
$\hat{\ell}_{i}$ for  $\tilde{\ell}_i$ in the formula for $\hat{\beta}^*$, it inflates the expectation of the denominator from $\frac{1}{n}\sum_i\tilde{\ell}_i^2$ to $\frac{1}{n}\sum_i\tilde{\ell}_i^2+K\operatorname{var}(\nu_i)/n$ under homoskedasticity.
Since the same denominator appears in the standard error formula (up to the scaling by $n$), overfitting shrinks standard errors too. Thus, 2SLS mimics not only the ordinary least squares estimate, but also its precision. As \textcite{bound1995problems} emphasize, this can mask the weak instrument problem: researchers may see tight standard errors around an estimate like that of ordinary least squares and wrongly think the causal effect is precisely estimated with minimal bias.

UJIVE fixes the denominator problem by construction, but the numerator  of the default heteroskedas\-ticity-robust standard error formula still has two issues.  First, with heterogeneous treatment effects, we cannot match the infeasible estimator's precision even when instruments are strong: the correct numerator picks up a second term reflecting variation in complier treatment effects across complier groups. Conveniently, \textcite{ImAn94} already derived the correct numerator for the standard error in their appendix \parencite[see also][]{lee18late}. Second, with many examiners, a third term appears due to estimation noise in $\hat{\ell}_{i}$ (this is the \textcite{bekker94} many instrument term; we give details in \Cref{sec:derivations}). Luckily, if we use UJIVE along with the plug-in heterogeneity-robust formula, it turns out that this also takes care of the many instrument term---we use this formula in our empirical illustration below.\footnote{The plug-in formula for the standard error's numerator has an own-observation bias that makes it overshoot its target, similar to the upward bias in the denominator of 2SLS\@. By a lucky coincidence, this overshooting more than accounts for the many instrument term making the formula valid (albeit slightly conservative). Correcting the slight upward bias involves a more complicated leave-three-out procedure: see  \textcite{AnSo23,yap25,kmwz26}.}

Unfortunately, there are still two potential issues with UJIVE standard errors. The first is a classic one, dating back to  \textcite{AnRu49}; it concerns the delta method argument that since the covariances in the numerator and denominator of $\hat\beta^*$ are each approximately normal, so is their ratio. For this argument to work well, we need the signal-to-noise ratio in the denominator (the ratio of its mean to its standard deviation) to be relatively large. To see why, suppose that there is no variation in leniency across patent examiners. Then the infeasible estimator $\hat{\beta}^*$ is a ratio of two normals, both with zero mean. Ratios of mean-zero normals follow a Cauchy distribution, which has much thicker tails than a normal distribution. Typical standard errors will not account for this. When the first stage is weak so the signal-to-noise ratio is non-zero but small, the estimator will have tails that are a bit thinner than Cauchy, but still much thicker than normal tails. For the feasible estimator, the signal-to-noise ratio scales with $\sqrt{K}\times (E[F]-1)$; if the first stage is weak enough to make this small, we have a weak instrument problem.

A vast literature has tackled the weak instrument problem over the years, including \textcite{AnRu49} for the case with a fixed number of instruments and several more recent papers proposing jackknife extensions for many instruments \parencite[e.g.,][]{MiSu22,MaOt24,yap25}. The fix turns out to be quite simple. To test that the causal effect equals a particular value $\beta_0$, we compute the residual $\hat{\varepsilon}_{i0}$ from projecting $y_i-x_i\beta_0$ onto the covariates. If the null is true, $\hat{\varepsilon}_{i0}$ should be uncorrelated with relative leniency, so we just need to check whether this correlation is statistically significant. This turns out to be equivalent to computing the UJIVE standard error just like before and checking whether UJIVE is significantly different from $\beta_0$, with one tweak: we use $\hat{\varepsilon}_{i0}$ instead of the UJIVE residual.\footnote{This is the test proposed by \textcite{yap25}. The test proposed by \textcite{MaOt24} uses the same idea, but doesn't use the heterogeneity-robust formula for the standard error. As a result, it is not robust to treatment effect heterogeneity, and neither is the many-instrument robust version of the \textcite{AnRu49} test considered in \textcite{MiSu22}.}

Recently, for the single-instrument case, \textcite{AngKo24} give a more optimistic take on the weak instrument problem: they observe that, for the case with known leniency, the usual standard error only leads to overly optimistic inference if the correlation between $\frac{1}{n}\sum_i \varepsilon_i\tilde{\ell}_i$ (the numerator of $\hat{\beta}^*-\beta$) and $\frac{1}{n}\sum_i x_i\tilde{\ell}_i$ (the denominator) is very high; if not, the weak instruments blow up the standard error with $\hat{\varepsilon}_i$ even more than if we used the robust \textcite{yap25} approach that uses $\hat{\varepsilon}_{i0}$ (which in this case coincides with the Anderson-Rubin test). This correlation parameter can be thought of as a measure of endogeneity, since it corresponds to the correlation between $\varepsilon_i$ and $x_i$ under homoskedasticity. But severe selection biases leading to very high endogeneity are unlikely in many applications; in such cases, the usual standard errors will not mislead. We show in \Cref{sec:derivations} that an analogous argument applies to the UJIVE estimator with many instruments and controls: the only difference is the correlation parameter uses the UJIVE leave-out leniency measure rather than $\tilde{\ell}_i$. While this correlation parameter no longer directly maps to endogeneity due to the more complicated variance structure, it can still be bounded by placing bounds on the treatment effects. In most applications these weak instrument issues are moot, as the variation in leniency (as measured by $\sqrt{K}\times (E[F]-1)$) will be substantial. But if the variation is small and a high correlation parameter cannot be ruled out, using $\hat{\varepsilon}_{i0}$ to compute standard errors with the above weak-instrument test may be prudent.

\subsection*{Clustering}
A final complication arises when the data is not \emph{iid} but clustered. This can happen for two reasons. Either because the sampling is clustered (say the patent office, instead of sharing the full data, only shares application data submitted on a randomly selected subset of dates, and the researcher would like to generalize to the full set of data), or because the assignment is clustered (say there is a lottery for each date, and the examiner who wins the lottery is assigned all applications filed on that date). Then, even if the relative leniency $\tilde{\ell}_i$ was known, a clustered version of the standard error formula is needed. Often, a researcher observes the full population of interest (in our application, we observe all patents in a specific time frame), so the main consideration is the assignment process.

\Textcite{aaiw20,aaiw23} show that if the examiner assignment process is \emph{iid}, one does not need to worry about the correlation patterns of the residual $\varepsilon_i$ across $i$, which is what traditionally motivates the decision to cluster. What matters for the standard error is the correlation pattern of the \emph{product}, $\tilde{\ell}_i\varepsilon_i$. Traditional clustering considerations treat the instruments and hence $\tilde{\ell}_i$ as fixed, or conditioned upon, in which case correlation patterns in the residual do matter for the clustering decision. Under the random instrument view, in contrast, the product will be uncorrelated whenever examiner assignment is independent across $i$, and we can use the same rule of thumb as in randomized experiments: cluster at the level of variation in the assignment. Furthermore, whether clustering matters for the magnitude of the standard error does not help determine whether it is \emph{needed}, so clustering the standard errors ``just in case'' may yield overly conservative inference. This reasoning also applies when leniency is unknown, but the number of examiners is fixed.

While these arguments have not yet been formally extended to the many examiner case, it seems natural to let the assignment process guide clustering decisions. When examiner assignment is independent across cases, robust standard errors suffice. When examiners assignment is instead clustered---such as when doctors are assigned to shifts, and a single doctor covers all patients arriving in a shift---cluster at the shift level, analogous to a clustered randomized controlled trial. Importantly, as with the simple motivating example above, this line of reasoning would never justify clustering on examiners (as is sometimes done in practice).

\section*{A Checklist for Leniency Designs}

We now present a step-by-step guide to leniency designs, illustrated with the replication file from \textcite{FarreMensaHegdeLjungqvist2020data}. Their setting examines how initial patent approval by a US Patent Office examiner affects subsequent innovation by US startups. Our reanalysis sample contains 32,514 first-time patent applications filed after 2001, with final decisions by 2013 (see \Cref{sec:data} for details).

\Cref{tab:summary_stats} summarizes key variables. The treatment---approval of the first-time application---occurs for 65 percent of applicants.  Following \textcite{farre2020patent}, we track several key outcomes: whether startups file additional patents (40 percent do, per the table), whether they have any subsequent patents that were approved (30 percent do), and counts of new applications, approvals, and citations. Besides standard covariates like patent class and claim counts, we observe venture capital funding rounds prior to application. For the subset of applications also submitted abroad, we observe a proxy of application quality: European or Japanese patent office decisions. We also construct a measure of local entrepreneurship using the number of startups in each state. As we show below, these additional observables can help gauge the plausibility of the leniency design's identifying assumptions and assess external validity. See again \Cref{sec:data} for more details on these variables.

Our checklist consists of five steps; to illustrate it, we use an original R package for UJIVE, available at \href{https://github.com/kolesarm/ManyIV}{\texttt{https://github.com/kolesarm/ManyIV}}.
\medskip

\noindent\emph{1. Identify necessary controls for as-good-as-random assignment, and what estimator and standard errors are appropriate given the quasi-experimental design}

A credible leniency design begins with institutional knowledge that justifies quasi-random decision-maker assignment. This justification naturally identifies any required controls. In the US patent office, for example, once patent applications are allocated to art units the assignment process to individual examiners is not standardized. However, \textcite{LeSa12} argue, based on interviews with examiners, that the assignment process is effectively random conditional on the set of examiners working in the art unit at the patent office at the time of the assignment. Many art units use the last digit of the application serial number for assignment. Others similarly use rules that imply an effectively random assignment, such as a ``first-in-first-out'' rule that assigns each incoming application to the first available examiner. See also the institutional discussions in \textcite{SaWi19} and \textcite{farre2020patent}.

While the potential examiner set is not directly observable, we can assume the set of examiners working in a given art unit changes only slowly across time and specify the necessary controls as art unit-by-cohort (i.e., year) fixed effects. Without these necessary controls, the analysis would conflate systematic differences across art units or changes across cohorts with examiner-specific variation. From our discussion on heterogeneous effects, recall that it can be important to have sufficiently flexible controls to ensure a local average treatment effect interpretation of leniency design estimates.

Quasi-random assignment is only one piece of instrument validity in an examiner design; the second is the exclusion restriction, requiring decision-maker assignment to only affect relevant outcomes through the specified treatment. Here too, a clear argument should be made from institutional knowledge. In the patent setting example,  exclusion is plausible since patent examiners make relatively narrow approval decisions and likely have no direct impact on the future innovativeness of the applying firm.

The assignment mechanism can also guide the appropriate level for clustering standard errors.  Individual-level randomization in the patent setting, where patents are routed idiosyncratically to examiners within art unit and cohort, makes heteroskedasticity-robust standard errors appropriate. Based on our rule of thumb, clustering is not needed because each case is assigned independently. Contrast this with settings where groups of observations are assigned together: if all applications from a given month were assigned to a randomly chosen examiner, we would cluster by month. As discussed above, if clustering the standard errors is needed, this also necessitates using a leave-own-cluster-out version of the UJIVE estimator.

\medskip

\noindent\emph{2. Verify balance on other observables}

The as-good-as-random assumption implies that the average predetermined characteristics should be equal across decision-makers, conditional on the necessary controls. We test this implication directly: for each observable characteristic, we run our UJIVE specification with that characteristic as the outcome. Significant coefficients indicate imbalance and potential threats to identification.
This unified approach---using the same UJIVE specification for both balance tests and treatment effect estimation---offers important advantages over alternatives. Consider the common practice of regressing observables on constructed leniency measures from a first stage. With many examiners and fixed effects, this can generate mechanical correlations that masquerade as true imbalance.\footnote{Even if one uses a leave-out leniency measure, the coefficient will still be biased due to estimation error in the leniency measure, which generates an errors-in-variables bias. Another approach to testing balance is to regress the predetermined characteristics on the full set of examiner indicators and controls, and report the joint F-test for the examiner indicators. With many examiners, however, the default F-statistic is not valid \parencite{AnSo23}.} UJIVE avoids these finite-sample biases while maintaining the interpretability of our test: the magnitude of any imbalance directly translates to potential bias in our treatment effects. If venture capital funding has a positive coefficient in this UJIVE balance check, for example, this would imply that the randomly assigned examiners' tendency to approve patents is also positively correlated with patents' ex ante venture capital funding. Such a finding would raise questions about whether the examiners are truly randomly assigned, and the size of the UJIVE coefficient can help quantify sensitivity to omitted variables.

\Cref{tab:balance_checks} presents balance tests for the \textcite{farre2020patent} reanalysis.  Each row reports a UJIVE coefficient from regressing a predetermined covariate on patent approval, instrumenting with examiner indicators and controlling for art unit-by-year fixed effects. The results support quasi-random assignment: seven of eight coefficients are statistically insignificant at the 5 percent level, and all are insignificant at the 1 percent level. As importantly, the magnitudes are economically negligible: coefficients are typically 10 times smaller than the estimated treatment effects discussed below. Consider the venture capital variable, which might raise particular concern if certain examiners systematically reviewed applications from better-funded startups. We estimate a close-to-zero effect, suggesting no systematic differences (coefficient = -0.024, standard error = 0.035). Similar null results emerge for technical complexity (independent claims), external quality validation (European patent approval), and local entrepreneurial environment (state startup density). The absence of systematic imbalance strengthens our confidence that examiner assignment generates plausibly exogenous variation in patent approval.

Balance tests can also examine post-assignment variables to detect exclusion
restriction violations. These tests ask whether decision-makers affect outcomes
through channels beyond the specified treatment. In the patent setting, for instance, one concern is that examiner assignment affects not only whether an
application is approved but also the speed of the application review---which
plausibly has independent effects on follow-up innovation outcomes by reducing
applicant uncertainty. A UJIVE regression which uses a measure of review speed
as the outcome (e.g., months under review), keeping again the same treatment,
instrument, and controls, could be used to assess the significance of such
potential exclusion restriction violations: if present, the research design will
not generally recover the effect of patent-granting.\footnote{\label{fn:review_speed}When
  examiner assignment affects review speed, it is still possible to recover the
  effect of patent-granting by using review speed as a second treatment, as done
  in \textcite{farre2020patent} \parencite[see also][]{amms15}. One simply uses
  examiner dummies as instruments for
  both treatments in a UJIVE regression. This approach generally requires the
  effect of patent-granting to be constant, and the effect of review speed to be
  linear and constant (or at least not be selected on, see
  \cref{fn:selection_on_gains}). A more flexible approach is to interact approval with review time, discretized to a mutually exclusive set of dummies (e.g., under and over 2 years). This can again be estimated with a UJIVE
  regression. \Textcite{BhSi24} give conditions under which such
  multiple-treatment specifications have a causal interpretation under effect
  heterogeneity.}

\medskip

\noindent\emph{3. Estimate treatment effects by UJIVE and alternative estimators}

The UJIVE estimator is the preferred choice for leniency designs as it avoids the subtle potential biases of alternative approaches when there are many decision-makers and controls. Comparing UJIVE estimates and standard errors to those from these alternative approaches---such as 2SLS---is illustrative of these biases. As usual with instrumental variable designs, it can also be instructive to compare the UJIVE estimates to ordinary least squares estimates of the treatment effect as a way to probe the importance of selection bias.

\Cref{tab:treat_effect_est} presents treatment effect estimates across four specifications. Column 1 reports our preferred UJIVE estimates using examiner indicators as instruments with art unit-by-year controls. Columns 2 and 3 implement 2SLS with alternative instruments: the \textcite{farre2020patent} constructed leniency measure, based on examiner approval rates from earlier periods in Column 2, and the full set of examiner indicators in Column 3. Column 4 shows ordinary least squares results.

The UJIVE estimates reveal positive and significant effects of patent approval
on subsequent innovation. Startups with approved applications are more likely to file future
patents, receive future approvals, and generate citations, with significant
effects on both extensive and intensive margins. The pattern of bias across
estimators is instructive. Ordinary least squares produces higher coefficients
for subsequent applications but lower coefficients for approvals and citations,
suggesting moderate selection that operates differently across outcomes. The
2SLS estimates using examiner indicators (Column 3) fall between ordinary least
squares and UJIVE, pulled toward the biased ordinary least squares results. This
intermediate position confirms our theoretical prediction: with many
instruments, 2SLS suffers from finite-sample bias that partially reproduces the
selection problem it aims to solve. The UJIVE estimates avoid this bias,
providing our most credible evidence that initial patent approval causally
stimulates follow-on innovation. These effects operate through multiple
channels—approved firms not only attempt more patents but also succeed in
getting them approved and generating scientific impact through
citations.\footnote{\Cref{tab:treat_effect_est_w_controls,tab:complier_char_w_controls}
  in the appendix show how the estimates change when we additionally control for
  pre-application covariates. Adding such precision controls can reduce standard
  errors, as \cref{fn:precision_controls} discusses. This is not generally the
  case here, however, since the precision gains in the outcome equation are not sufficiently large to offset the additional first-stage noise the controls introduce.}
  The 2SLS estimates using the
\textcite{farre2020patent} measure of leniency tend to be \emph{larger} than
UJIVE\@. Since their leniency construction is similar to that of JIVE, this may
reflect a many-covariate bias.

Standard errors also tell an important story in \Cref{tab:treat_effect_est}. The many-instrument 2SLS (Column 3) produces standard errors roughly 3--4 times smaller than UJIVE (Column 1)—a difference that reflects statistical pathology rather than efficiency gains. As we have discussed, 2SLS standard errors and estimates are both pulled towards ordinary least squares with many examiners, creating a false sense of precision around the biased estimates. The shrinkage in standard errors occurs even when using the leave-out \textcite{farre2020patent} leniency measure in Column 2, since these do not reflect estimation error in the leniency measure. UJIVE avoids both failures: it delivers unbiased point estimates and standard errors that accurately capture sampling variation.

\medskip

\noindent\emph{4. Test monotonicity to assess the plausibility of a LATE interpretation}

To ensure that the UJIVE estimates have a clear causal interpretation under heterogeneous effects, we also need a version of first-stage monotonicity. Since the \textcite{ImAn94} first-stage monotonicity condition is likely too strong in the examiner setting, we implement a UJIVE test of the weaker average monotonicity assumption: that the average leniency of the examiners who would grant the firm a patent exceeds the average leniency of those who would deny it.

\Cref{fig:monotonicity} tests average monotonicity using outcome-specific UJIVE regressions. For each possible outcome value, we create an indicator for that value (e.g., a dummy for zero subsequent applications) and interact it with treatment status as our dependent variable, maintaining examiner instruments and art unit-by-year controls. The specification is estimated as in \Cref{tab:balance_checks,tab:treat_effect_est}. The fact that all of these UJIVE estimates are positive, with tight 95\% confidence intervals, builds support for average monotonicity holding.

\medskip

\noindent\emph{5. Characterize compliers to assess external validity}

Having found support for the design's internal validity, we next consider external validity. For this we estimate complier characteristics using UJIVE with modified outcomes: we regress the interaction of each covariate with the treatment indicator on the treatment itself, maintaining examiner instruments and controls. Such specifications identify weighted averages of complier characteristics for treated compliers using the same weights as our main estimates. Using one minus the treatment as the treatment variable identifies characteristics for untreated compliers, which should be the same by virtue of as-good-as-random assignment. We efficiently pool these two specifications, following \textcite{ANGRIST20231}.\footnote{For a characteristic $v_i$, one can show this can be done by using $\tilde{x}_i=2x_i-1$ as the transformed treatment variable that takes on values $\{-1,1\}$ instead of $\{0,1\}$. We then run a UJIVE regression of $v_i\times \tilde{x}_i$ onto $\tilde{x}_i$, maintaining the controls and examiner instruments.}

\Cref{tab:complier_char} reveals that compliers resemble the full sample. Column 2 reports complier means for eight characteristics, while Column 1 shows population means. None of the differences are statistically significant, while all the complier estimates fall within logical bounds---again supporting average monotonicity. Compliers have similar rates of venture capital funding, comparable technical complexity in their applications, and equivalent representation across patent classes. This representativeness matters for interpretation: our treatment effects likely approximate average effects for the full population of first-time patent applicants, not just the subset whose outcomes depend on examiner assignment. The validity of the UJIVE estimates thus appears to broadly inform how patenting affects startup innovation.

\section*{Conclusion}
Leniency designs allow for estimation of causal effects when expert decision-makers are as-good-as-randomly assigned and exert discretion on a high-stakes treatment. But, as with many things in both life and econometrics, the devil is in the details. Our review emphasizes how subtle choices in estimation and inference can have important consequences. The UJIVE estimator emerges as a natural solution to the many instrument and control problems that plague two-stage least squares estimation, while correctly estimated heteroskedastic-robust standard errors are a natural choice when assignment operates at the individual level. Our practical checklist and reanalysis of \textcite{farre2020patent} show how UJIVE can also be used for key auxiliary analyses, such as checking balance, testing average monotonicity, and characterizing compliers. We hope this toolkit gives applied researchers more confidence when approaching leniency designs, and we encourage them to consult this manual whenever questions of operation arise.

\singlespacing

\printbibliography
\clearpage

\begin{figure}[p]
\vspace{-20pt}  %
\caption{Monotonicity Checks}\label{fig:monotonicity}
\begin{subfigure}{\textwidth}
\caption{\# of subsequent applications}\label{fig:monotonicity_num_subseq_appl}
\includegraphics[width=\linewidth]{./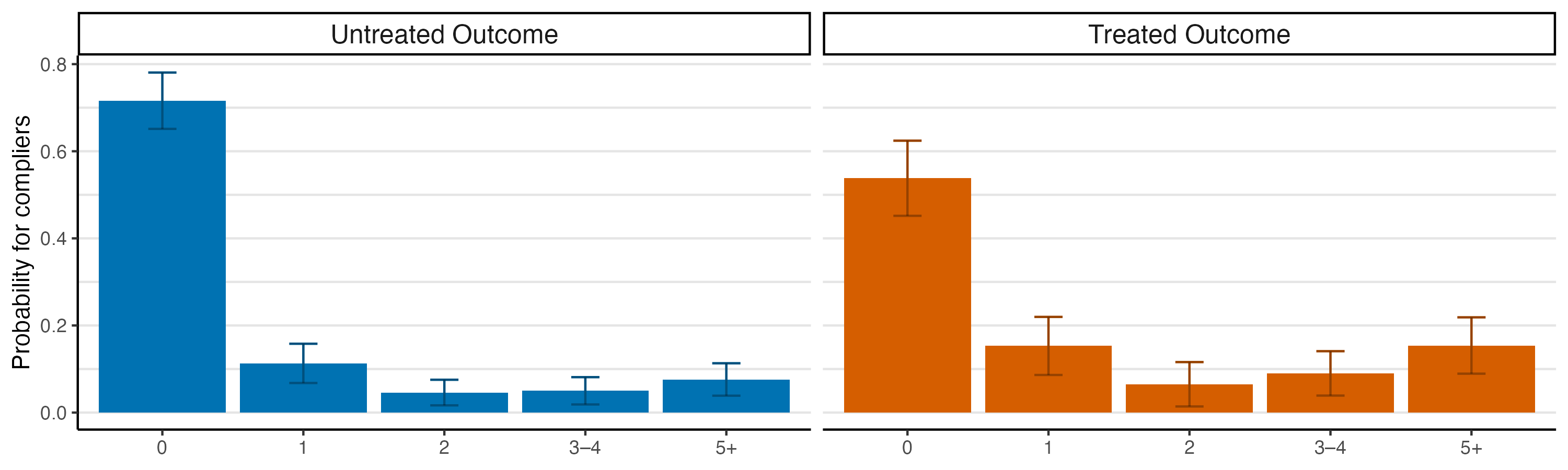}
\end{subfigure}\vspace{10pt}
\begin{subfigure}{\textwidth}
\caption{\# of subsequent approved applications}\label{fig:monotonicity_num_subseq_approval}
\includegraphics[width=\linewidth]{./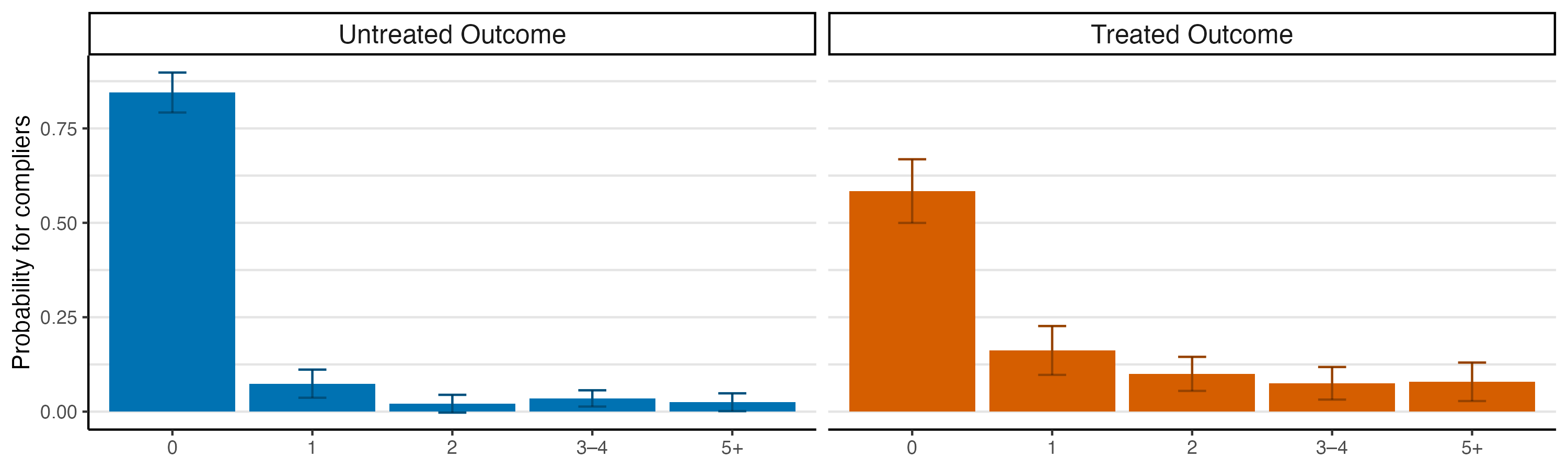}
\end{subfigure}\vspace{10pt}
\begin{subfigure}{\textwidth}
\caption{\# of citations to subsequent patents}\label{fig:monotonicity_num_subeq_cites}
\includegraphics[width=\linewidth]{./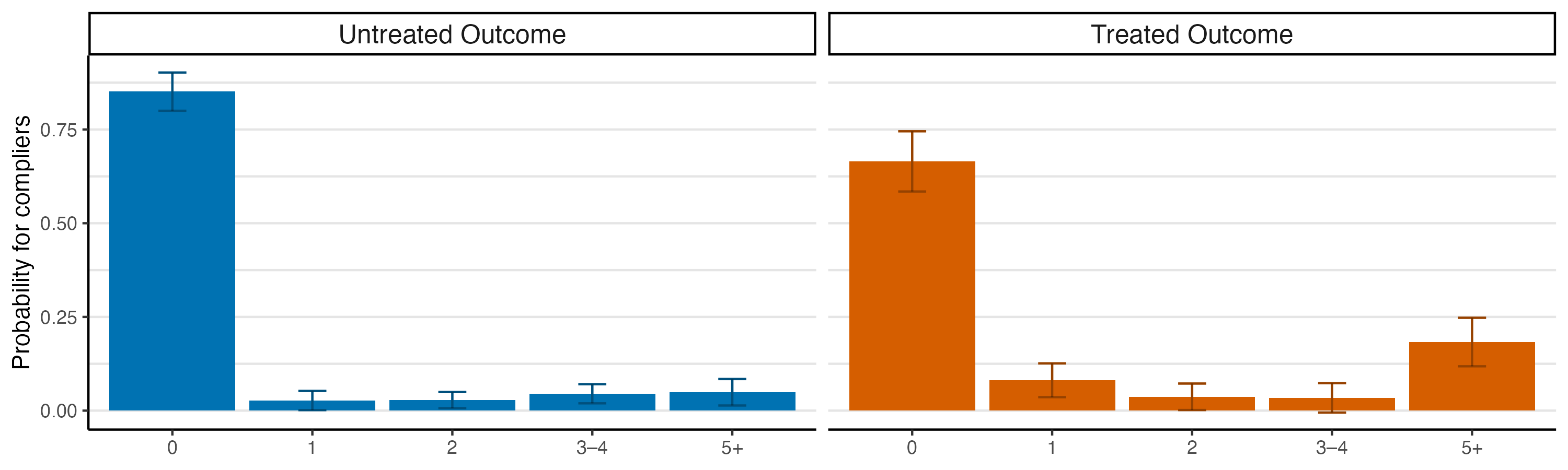}
\end{subfigure}
    \footnotesize
    Notes: This figure visualizes the distributions of treated and untreated outcomes of each outcome variable in Panel A of \Cref{tab:summary_stats}. Outcomes for treated and untreated compliers are estimated separately with UJIVE as described in the text. Vertical bars indicate 95\% confidence intervals that are robust to heteroskedasticity and treatment effect heterogeneity.
\vspace{-20pt}
\end{figure}

\clearpage

\begin{table}[p]
\vspace{-10pt}  %
\centering
\begin{threeparttable}
\footnotesize
\caption{\textcite{farre2020patent} Reanalysis Sample}\label{tab:summary_stats}
\begin{tabular}{l c c c}
\toprule
 & Mean & Std. Dev. & \# Obs. \\
 & (1) & (2) & (3) \\
\midrule
\multicolumn{4}{c}{\textit{Panel A\@: Outcomes}}\\[2pt]
Any subsequent application & 0.404 & 0.491 & 32,514\\
\# subsequent applications & 2.339 & 9.594 & 32,514\\
Any subsequent approved application & 0.299 & 0.458 & 32,514\\
\# subsequent approved applications & 1.276 & 6.021 & 32,514\\
Any citation to subsequent patents & 0.252 & 0.434 & 32,514\\
\# citations to subsequent patents & 5.786 & 52.808 & 32,514\\
\cmidrule(lr){1-4} \multicolumn{4}{c}{\textit{Panel B\@: Treatment}}\\[2pt]
Application approved & 0.649 & 0.477 & 32,514\\
\cmidrule(lr){1-4} \multicolumn{4}{c}{\textit{Panel C\@: Covariates}}\\[2pt]
Patent class group I & 0.162 & 0.368 & 32,514\\
Patent class group II & 0.410 & 0.492 & 32,514\\
Patent class group III & 0.445 & 0.497 & 32,514\\
\# independent claims in application & 3.730 & 3.174 & 27,226\\
\# VC rounds before application & 0.124 & 0.577 & 32,514\\
\# startups in HQ state & 317.3 & 354.3 & 32,514\\
Approval by European Patent Office & 0.372 & 0.484 & 3,287\\
Approval by Japanese Patent Office & 0.400 & 0.490 & 1,206\\
\bottomrule
\end{tabular}
    Notes: This table reports means and standard deviations for the sample used in reanalyzing \textcite{farre2020patent}. See \Cref{sec:data} for variable definitions.
\end{threeparttable}
\vspace{-10pt}
\end{table}

\begin{table}[p]
\centering
\begin{threeparttable}
\footnotesize
\caption{Balance Checks}\label{tab:balance_checks}
\begin{tabular}{l c c c}
\toprule
 & Coefficient & Std. Error & \# Obs. \\
 & (1) & (2) & (3) \\
\midrule
Patent class group I & -0.040 & 0.017 & 32,514\\
Patent class group II & 0.011 & 0.021 & 32,514\\
Patent class group III & 0.032 & 0.021 & 32,514\\
Log(\# independent claims in application) & 0.066 & 0.084 & 27,226\\
Log(1 + \# VC rounds before application) & -0.024 & 0.035 & 32,514\\
Log(\# startups in HQ state) & -0.273 & 0.143 & 32,514\\
Approval by European Patent Office & 0.013 & 0.212 & 3,287\\
Approval by Japanese Patent Office & 0.525 & 1.258 & 1,206\\
\bottomrule
\end{tabular}
    Notes: This table reports results of UJIVE regressions of each covariate in Panel C of \Cref{tab:summary_stats} on application approval, instrumenting with examiner indicators. All estimates control for art unit-by-year fixed effects. The reported standard errors are robust to heteroskedasticity and treatment effect heterogeneity.
\end{threeparttable}
\end{table}

\begin{table}[p]
\centering
\begin{threeparttable}
\footnotesize
\caption{Treatment Effect Estimates}\label{tab:treat_effect_est}
\begin{tabular}{l c c c c}
\toprule
 & UJIVE & \multicolumn{2}{c}{2SLS} & \multirow{3}{*}{OLS} \\
\cmidrule(lr){2-2}\cmidrule(lr){3-4} & Examiner & FMHL & Examiner  \\
 & indicators & approval rate & indicators & \\
 & (1) & (2) & (3) & (4) \\
\midrule
Any subsequent application & 0.173 & 0.265 & 0.232 & 0.234 \\
 & (0.055) & (0.023) & (0.016) & (0.006) \\
Log(1 + \# subsequent applications) & 0.323 & 0.456 & 0.374 & 0.357 \\
 & (0.100) & (0.037) & (0.027) & (0.009) \\
Any subsequent approved application & 0.259 & 0.250 & 0.240 & 0.223 \\
 & (0.050) & (0.020) & (0.014) & (0.005) \\
Log(1 + \# subsequent approved applications) & 0.356 & 0.362 & 0.323 & 0.291 \\
 & (0.081) & (0.029) & (0.021) & (0.007) \\
Any citation to subsequent patents & 0.183 & 0.210 & 0.173 & 0.164 \\
 & (0.049) & (0.020) & (0.014) & (0.005) \\
Log(1 + \# citations to subsequent patents) & 0.419 & 0.480 & 0.372 & 0.339 \\
 & (0.125) & (0.044) & (0.033) & (0.011) \\
\bottomrule
\end{tabular}
    Notes: This table reports estimates of the effect of application approval on each outcome in Panel A of \Cref{tab:summary_stats}. Column 1 estimates effects with UJIVE, instrumenting with examiner indicators. Columns 2 and 3 instead estimate effects with 2SLS, instrumenting with examiners' approval rate as computed by \textcite{farre2020patent} and examiner indicators respectively. Column 4 estimates effects by ordinary least squares. All estimates control for art unit-by-year fixed effects. The sample size is 32,514 in all specifications. Standard errors, reported in parentheses, are robust to heteroskedasticity and treatment effect heterogeneity.
\end{threeparttable}
\end{table}

\begin{table}[p]
\centering
\begin{threeparttable}
\footnotesize
\caption{Complier Characteristics}\label{tab:complier_char}
\begin{tabular}{l c c c}
\toprule
 & Sample  & Complier & \multirow{2}{*}{\# Obs.} \\
 & Mean & Mean &\\
 & (1) & (2) & (3) \\
\midrule
Patent class group I & 0.162 & 0.199 & 32,514\\
 &  & (0.031) & \\
Patent class group II & 0.410 & 0.415 & 32,514\\
 &  & (0.038) & \\
Patent class group III & 0.445 & 0.395 & 32,514\\
 &  & (0.036) & \\
\# independent claims in application & 3.730 & 3.660 & 27,226\\
 &  & (0.224) & \\
\# VC rounds before application & 0.124 & 0.158 & 32,514\\
 &  & (0.039) & \\
\# startups in HQ state & 317.3 & 329.3 & 32,514\\
 &  & (21.9) & \\
Approval by European Patent Office & 0.372 & 0.201 & 3,287\\
 &  & (0.097) & \\
Approval by Japanese Patent Office & 0.400 & 0.430 & 1,206\\
 &  & (0.518) & \\
\bottomrule
\end{tabular}
    Notes: This table reports means of each covariate in Panel C of \Cref{tab:summary_stats} for the full sample and for compliers. Complier means are estimated with UJIVE as described in the text. Standard errors, reported in parentheses, are robust to heteroskedasticity and treatment effect heterogeneity.
\end{threeparttable}
\end{table}

\clearpage

\begin{appendices}
\crefalias{section}{appsec}
\section{Derivations}\label{sec:derivations}
\subsection*{Estimation with Many Decision-makers}

To analyze the estimators in the main text, we follow the literature on many
instruments \parencite[e.g.,][]{bekker94,cshnw12}, and condition on the
realizations of the instruments and covariates, $\{z_{i}, w_{i}\}_{i=1}^{n}$,
throughout this section. To ease notation, we keep the conditioning implicit,
writing, e.g., $E[y_{i}]$ rather than $E[y_{i}\mid \{z_{i}, w_{i}\}_{i=1}^{n}]$.
To simplify the analysis, we also assume that the first-stage equation is
correctly specified, so that the residual $\nu_{i}$ is mean zero, and not just
uncorrelated with the instruments and covariates,
\begin{equation}\label{eq:fs_correct}
  E[\nu_{i}]=0.
\end{equation}
Since we are assuming that the instrument is as-good-as-randomly assigned
conditional on the covariates, it follows that $E[\varepsilon_{i}]$ depends only on
$w_{i}$ and not on $z_{i}$. To simplify derivations, we further assume that the
dependence is linear:
\begin{equation}\label{eq:structural_linear}
  E[\varepsilon_{i}]=w_{i}'\alpha
\end{equation}
for some vector $\alpha$. Similar to, for example, \textcite{cshnw12} we could
allow for asymptotically negligible non-linearities in both
\cref{eq:fs_correct,eq:structural_linear} at the cost of complicating the
derivations without affecting the results. Allowing for asymptotically
non-negligible non-linearities requires treating $z_{i}$ as random. Finally, we
assume that the errors $(\varepsilon_{i}, \nu_{i})$ are independent across $i$.

First consider the infeasible estimator $\hat{\beta}^{*}$. Since
$\tilde{\ell}_{i}$ depends only on the instruments and covariates, it follows
using~\cref{eq:fs_correct} that
$E[\sum_{i}\tilde{\ell}_{i}x_{i}]=\sum_{i}\tilde{\ell}_{i}\ell_{i}=\sum_{i}\tilde{\ell}_{i}^{2}$.
Similarly, using \cref{eq:structural_linear}, we have
$E[\sum_{i}\tilde{\ell}_{i}\varepsilon_{i}]=\sum_{i}\tilde{\ell}_{i}w_{i}'\alpha=0$,
where the last equality uses the fact that $\tilde{\ell}_{i}$ and $w_{i}$ are
orthogonal by definition of a sample residual,
$\sum_{i}\tilde{\ell}_{i}w_{i}=0$. Approximating the expectation of a ratio by a
ratio of expectations, we therefore obtain
\begin{equation*}
  E[\hat{\beta}^{*}]-\beta\approx \frac{0}{\sum_{i}\tilde{\ell}_{i}^{2}}=0.
\end{equation*}
 While using the ratio of expectations to
approximate the expectation of a ratio may appear ad hoc, one can show that the
approximation becomes exact asymptotically under the many instrument asymptotics
proposed by \textcite{bekker94}, where the dimension $K$ of the instrument
vector and the dimension $L$ of the covariate vector grow with sample size---the
results we give below mirror the asymptotic bias results given in
\textcite{ek17}, for instance. Therefore, an alternative interpretation of the
approximation $\approx$ is that it corresponds to the probability limit under
the \textcite{bekker94} sequence.

To derive the bias expressions for the other estimators, it is convenient to use
matrix notation. We denote the vectors of $n$ treatment and outcome observations
by $x$ and $y$, respectively, and denote the matrix of covariates and
instruments with rows ${w}_{i}'$ and $z_{i}'$ by $W$ and $Z$, respectively. To
state the results, it will be convenient to link the sum of squares of the
relative leniency, $\sum_{i}\tilde{\ell}_{i}^{2}$, to the first-stage $F$
statistic and the partial $R^{2}$.
 Recall that the first-stage $F$ statistic for testing the null that
$\pi=0$ is, under homoskedastic errors, given by
$F=\hat{\pi}'\tilde{Z}'\tilde{Z}\hat{\pi}/ K\var(\nu_{i})$, with expectation
$E[F]=\sum_{i}\tilde{\ell}_{i}^{2}/K\var(\nu_{i})+1$. The population partial
$R^{2}$ from adding the instruments to the first-stage regression is given by
$R^{2}=(\sum_{i}\tilde{\ell}_{i}^{2})/(\sum_{i}\tilde{\ell}_{i}^{2}+n\var(\nu_{i}))$
so
\begin{equation}\label{eq:F-R2}
\frac{\sum_{i}\tilde{\ell}_{i}^{2}}{\var(\nu_{i})}=  K(E[F]-1)=\frac{nR^{2}}{1-R^{2}}.
\end{equation}
This identity allows us to state the bias of a large class of estimators using
the first-stage $F$ statistic, as the following lemma shows.
\begin{lemma}\label{theorem:approx_bias}
  Consider an estimator $\hat{\beta}_{G}=y'Gx/x'Gx$ based on a relative leniency
  measure $Gx$ such that $Gx=\tilde{\ell}+G\nu$. Then, under homoskedastic
  errors,
  $E[\hat{\beta}_{G}]-\beta\approx \frac{\tr(G)\cov(\varepsilon_{i}, \nu_{i})}{
    \sum_{i}\tilde{\ell}_{i}^{2}+\tr(G) \var(\nu_{i})} = \frac{\tr(G)}{
    K(E[F]-1)+\tr(G)} \frac{\cov(\varepsilon_{i}, \nu_{i})}{\var(\nu_{i})} $.
\end{lemma}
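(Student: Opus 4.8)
The plan is to mirror the calculation already carried out for the infeasible estimator $\hat\beta^{*}$, now with the general weighting matrix $G$. First I would use the outcome equation $y=\beta x+\varepsilon$ to write $\hat\beta_{G}-\beta=\varepsilon'Gx/x'Gx$, since the $\beta x'Gx$ piece of the numerator cancels against $\beta\, x'Gx$. Applying the same ratio-of-expectations approximation justified earlier (exact asymptotically under the \textcite{bekker94} sequence), it then suffices to evaluate $E[\varepsilon'Gx]$ and $E[x'Gx]$ separately.

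For the denominator I would decompose $x=\ell+\nu$, where $\ell=Z\pi+W\delta=E[x]$. The lemma's hypothesis $Gx=\tilde\ell+G\nu$, combined with $Gx=G\ell+G\nu$, forces the identity $G\ell=\tilde\ell$. Expanding $x'Gx=\ell'G\ell+\ell'G\nu+\nu'G\ell+\nu'G\nu$ and taking expectations, the two cross terms drop out by $E[\nu]=0$, the quadratic term gives $E[\nu'G\nu]=\tr(G)\var(\nu_{i})$ under homoskedasticity, and the leading term becomes $\ell'G\ell=\ell'\tilde\ell$. Since $\tilde\ell=M_{W}\ell$ for the annihilator $M_{W}=I-W(W'W)^{-1}W'$, idempotency gives $\ell'\tilde\ell=\tilde\ell'\tilde\ell=\sum_{i}\tilde\ell_{i}^{2}$, which reproduces the stated denominator $\sum_{i}\tilde\ell_{i}^{2}+\tr(G)\var(\nu_{i})$.

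For the numerator I would split $Gx=\tilde\ell+G\nu$ again. The term $E[\varepsilon'\tilde\ell]$ vanishes by the same orthogonality used for $\hat\beta^{*}$: $E[\varepsilon]=W\alpha$ while $W'\tilde\ell=0$ because $\tilde\ell$ lies in the column space orthogonal to $W$. The surviving piece is $E[\varepsilon'G\nu]=\sum_{i,j}G_{ij}E[\varepsilon_{i}\nu_{j}]$; here cross-observation independence kills every off-diagonal entry ($E[\varepsilon_{i}\nu_{j}]=E[\varepsilon_{i}]E[\nu_{j}]=0$ for $i\neq j$, using $E[\nu_{j}]=0$), leaving $\sum_{i}G_{ii}\cov(\varepsilon_{i},\nu_{i})=\tr(G)\cov(\varepsilon_{i},\nu_{i})$ under homoskedasticity. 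Combining numerator and denominator yields the first equality, and substituting the identity~\eqref{eq:F-R2}, $\sum_{i}\tilde\ell_{i}^{2}=K(E[F]-1)\var(\nu_{i})$, then factoring out $\var(\nu_{i})$ delivers the second.

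I expect the main obstacle to be the two steps that translate the abstract matrix quantities into the interpretable objects in the statement: recognizing that $Gx=\tilde\ell+G\nu$ together with $x=\ell+\nu$ pins down $G\ell=\tilde\ell$ (and hence collapses $\ell'G\ell$ to $\sum_{i}\tilde\ell_{i}^{2}$ via idempotency of $M_{W}$), and correctly isolating the diagonal in $E[\varepsilon'G\nu]$ so that the within-observation endogeneity $\cov(\varepsilon_{i},\nu_{i})$ enters only through $\tr(G)$. Everything else is bookkeeping, provided one keeps track that homoskedasticity is precisely what makes both $\var(\nu_{i})$ and $\cov(\varepsilon_{i},\nu_{i})$ constant across $i$, so they factor cleanly out of the traces.
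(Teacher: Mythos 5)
Your proposal is correct and follows essentially the same route as the paper's own (much terser) argument: compute $E[x'Gx]=\sum_i\tilde\ell_i^2+\tr(G)\var(\nu_i)$ and $E[\varepsilon'Gx]=\tr(G)\cov(\varepsilon_i,\nu_i)$ using the decomposition $Gx=\tilde\ell+G\nu$, cross-observation independence, and homoskedasticity, then take the ratio of expectations and substitute the identity~\eqref{eq:F-R2}. The steps you flag as the main obstacles ($G\ell=\tilde\ell$ and $\ell'\tilde\ell=\sum_i\tilde\ell_i^2$ via idempotency of the annihilator, and isolating the diagonal of $E[\varepsilon'G\nu]$) are exactly the details the paper leaves implicit, and you handle them correctly.
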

The lemma follows by noting that
$E[x'Gx]=\sum_{i}\tilde{\ell}_{i}^{2}+E[\nu'G\nu]=
\sum_{i}\tilde{\ell}_{i}^{2}+\sum_{i}G_{ii}E[\nu_{i}^{2}]$, and
$E[\varepsilon'Gx]=\sum_{i}G_{ii}\cov(\varepsilon_{i}, \nu_{i})$. Under
homoskedastic errors, $E[\nu_{i}^{2}]$ and $\cov(\varepsilon_{i}, \nu_{i})$ do
not depend on the index $i$, which yields the first expression. The second
expression follows directly from \cref{eq:F-R2}.

The ordinary least squares estimator
$\hat{\beta}_{OLS}=\sum_{i}\tilde{x}_{i}y_{i}/\sum_{i}\tilde{x}_{i}^{2}$ uses
the relative leniency measure $Mx$, where $M=I-W(W'W)^{-1}W'$ denotes the $n\times n$
annihilator matrix associated with the covariates. Since the trace of $M$ is
$n-L$, where $L$ is the covariate dimension, it follows from
\Cref{theorem:approx_bias} that its bias is approximately
\begin{equation*}
  E[\hat{\beta}_{OLS}]-\beta\approx\frac{1-L/n}{
    K(E[F]-1)/n+1-L/n} \frac{\cov(\varepsilon_{i}, \nu_{i})}{\var(\nu_{i})}
  \approx (1-R^{2}) \frac{\cov(\varepsilon_{i}, \nu_{i})}{\var(\nu_{i})},
\end{equation*}
where the second approximation uses \cref{eq:F-R2} and the assumption that
$L/n$ is close to zero.

The 2SLS estimator uses the relative leniency measure $\tilde{z}_{i}\hat{\pi}$, which
in matrix notation may be written $Hx$, with
$H=\tilde{Z}(\tilde{Z}'\tilde{Z})^{-1}\tilde{Z}'$ denoting the projection matrix
associated with the instrument $\tilde{z}_{i}$. This matrix has trace equal to
$K$, which yields
\begin{equation*}
  E[\hat{\beta}_{TSLS}]-\beta\approx\frac{1}{
    E[F]} \frac{\cov(\varepsilon_{i}, \nu_{i})}{\var(\nu_{i})}
\end{equation*}
Hence, the relative bias of 2SLS, $(E[\hat{\beta}_{TSLS}]-\beta) \big/ (E[\hat{\beta}_{OLS}]-\beta)$, is approximately given by
$\frac{1}{(1-R^{2})E[F]}$, as claimed in the main text.

To show approximate unbiasedness of UJIVE, observe that we can write its
leniency measure using matrix notation as $G_{UJIVE}x$ with
$G_{UJIVE}=H-\diag(H_{ii}/(M_{ii}-H_{ii}))(M-H)$. Note that this matrix formula
also implies that UJIVE can be implemented in one step: one only needs to
compute the diagonals of the projection matrices $H$ and $M-H$, the fitted
values $Hx$, and the residuals $(M-H)x$, all of which obtain easily from
regressing $x_{i}$ onto $\tilde{z}_{i}$ and onto $z_{i},w_{i}$. The unbiasedness
property then follows directly from~\Cref{theorem:approx_bias} by noting that
$\tr(G_{UJIVE})=0$.

For the JIVE1 estimator studied in \textcite{aik99}, the relative leniency
measure can be written as $G_{JIVE}x$ with
$G_{JIVE}=M(I-D_{Q})^{-1}(H_{Q}-D_{Q})$, where $H_{Q}=H+W(W'W)^{-1}W'$ is the
projection matrix associated with the full vector of right-hand side variables
$(z_{i},w_{i})$, and $D_{Q}$ is its diagonal. Since
$\tr(G)=\tr((I-D_{Q})^{-1}(H-D_{Q}M))=-L$, we obtain
\begin{equation*}
  E[\hat{\beta}_{JIVE}-\beta]\approx -\frac{L}{
    K(E[F]-1)-L} \frac{\cov(\varepsilon_{i}, \nu_{i})}{\var(\nu_{i})},
\end{equation*}
which equals the 2SLS bias times $-\frac{E[F]}{K(E[F]-1)/L-1}$, as claimed in
the main text.

For IJIVE, the $G$ matrix takes the form
$G_{IJIVE}=M(I-\diag(H))^{-1}(H-\diag(H))M$, with trace equal to
$\tr((I-\diag(H))^{-1}(H-\diag(H)M))=\sum_{i}\frac{H_{ii}(1-M_{ii})}{1-H_{ii}}$.
While this quantity is always positive, under balanced design, i.e., when the
diagonal elements of $H_{ii}$ are all approximately equal, it simplifies to
$LK/(n-K)$, which is much smaller than $\tr(G_{JIVE})$.

To derive the bias-corrected 2SLS estimator, observe that the proof of
\Cref{theorem:approx_bias} shows that under homoskedasticity, the numerator and
denominator of the 2SLS formula have expectations
$\frac{1}{n}\sum_{i}E[\hat{\ell}_i
y_i]=\frac{1}{n}\sum_i\tilde{\ell}_i^2\beta+K\cov(\nu_i\beta+\varepsilon_i,
\nu_i)/n$, and
$\frac{1}{n}\sum_{i}E[\hat{\ell}_{i}
x_i]=\frac{1}{n}\sum_i\tilde{\ell}_i^2+K\var(\nu_i)/n$, respectively. An
unbiased estimator of $\var(\nu_i)$ obtains as the degrees-of-freedom adjusted
sample variance of the first-stage residuals, $\hat{\nu}=(M-H)x$, given by
$\widehat{\var}(\nu_i)=\sum_{i} \hat{\nu}_i^2/(n-K-L)=x'(M-H)x/(n-K-L)$. Since
$\nu_i\beta+\varepsilon_i$ corresponds to the reduced form residual from
regressing $y$ onto $Z$ and $W$, the degrees-of-freedom adjusted sample
covariance between the first-stage residuals and reduced-form residuals,
$(M-H)y$, given by $y'(M-H)x/(n-K-L)$, gives an unbiased estimator of
$\cov(\nu_i\beta+\varepsilon_i, \nu_i)$. Subtracting estimates of 2SLS numerator
and denominator bias gives the bias-corrected 2SLS estimator,
\begin{equation*}
\hat{\beta}_{B2SLS}=\frac{\sum_i\hat{\ell}_{i}y_i-Ky'(M-H)x/(n-K-L)}{\sum_i\hat{\ell}_{i}x_i-Kx'(M-H)x/(n-K-L)}
=\frac{y'Hx-Ky'(M-H)x/(n-K-L)}{x'Hx-Kx'(M-H)x/(n-K-L)},
\end{equation*}
which uses the leniency measure $G_{B2SLS}x$ with $G_{B2SLS}=H-K(M-H)/(n-K-L)$. This has trace equal to $0$, and $\hat{\beta}_{B2SLS}$ is therefore unbiased under homoskedasticity. This version of the bias-corrected 2SLS estimator corresponds to the version studied in \textcite{kcfgi14}.

Finally, we show that the UJIVE estimator and the FEJIV estimator of
\textcite{csw23} can be interpreted as minimizing the mean-squared error of the
estimated leniency under homoskedasticity. It follows from the proof of
\Cref{theorem:approx_bias} that a leniency measure $Gx$ is unbiased under
heteroskedasticity if the diagonal of $G$ is zero, $G_{ii}=0$, and if $GW=0$ and
$G\tilde{Z}=\tilde{Z}$ (the latter two conditions are equivalent to the
condition $Gx=\tilde{\ell}+G\nu$ in \Cref{theorem:approx_bias}). Under these
conditions, the mean squared error of the leniency measure is
$E[(Gx-\tilde{\ell})'(Gx-\tilde{\ell})]=E[\nu'G'G\nu]$. When $\nu_{i}$ is
homoskedastic, this expectation simplifies to $\var(\nu_{i})\tr(G'G)$.
Therefore, minimizing the mean squared error subject to an unbiasedness
constraint is equivalent to minimizing $\tr(G'G)$ subject to $\diag(G)=0$,
$GW=0$, and $G\tilde{Z}=\tilde{Z}$. The first-order condition for the Lagrangian
associated with this problem is given by
\begin{equation*}
  G'=W\Pi'+\tilde{Z}A'+\diag(\lambda),
\end{equation*}
where $\Pi$ is the matrix of the Lagrange multipliers associated with the
constraint $GW=0$, $A$ is the matrix of the Lagrange multipliers associated with
the constraint $G\tilde{Z}=\tilde{Z}$, and $\lambda$ is the vector of Lagrange
multipliers associated with the constraint $\diag(G)=0$. Multiplying the
first-order condition by $W'$ and $\tilde{Z}'$, respectively, allows us to solve
for $\Pi$ and $A$: $\Pi'=-(W'W)^{-1}W'\diag(\lambda)$ and
$A'=(\tilde{Z}'\tilde{Z})^{-1}\tilde{Z}'(I-\diag(\lambda))$. Plugging this back
into the first-order condition then yields $G'=H+(M-H)\diag(\lambda)$. Since the
diagonal of $G$ is zero, this implies $\lambda_{i}=-H_{ii}/(M_{ii}-H_{ii})$.
Thus, $G_{UJIVE}=H - \diag(H_{ii}/(M_{ii}-H_{ii}))(M-H)$ solves the minimization
problem: the UJIVE leniency measure minimizes the mean squared error subject to
an unbiasedness constraint.

Now consider the same minimization problem, but subject to the additional
constraint that $W'G=0$ and $\tilde{Z}'G=\tilde{Z}'$. The first additional
constraint implies that $W'Gx=0$, so that the leniency measure is orthogonal to
the covariates. The second additional constraint ensures that the in-sample
covariance of the leniency measure with $\tilde{Z}$ is the same as the in-sample
covariance of $x$ with $\tilde{Z}$: $\tilde{Z}'Gx=\tilde{Z}'x$. These additional
constraints imply that the resulting estimator corresponds to the MINQUE
estimator in \textcite{rao70}; they also ensure that the optimal $G$ matrix is
symmetric. Under these additional constraints, the first-order condition becomes
\begin{equation*}
  G'=W \Pi'+\tilde{\Pi}'W+\tilde{Z}A'+\tilde{A}'\tilde{Z}'+\diag(\lambda),
\end{equation*}
where $\tilde{\Pi}$ is the matrix of Lagrange multipliers associated with the
constraint $W'G=0$ and $\tilde{A}$ is the matrix of Lagrange multipliers
associated with the constraint $\tilde{Z}'G=\tilde{Z}'$. Solving for the Lagrange
multipliers as before, we obtain $G'=H-(M-H)\diag(\lambda)(M-H)$. Since the
diagonal of $G$ is zero, this implies
$\diag(H)=\diag((M-H)\diag(\lambda)(M-H))$, which is equivalent to
\begin{equation}\label{eq:fejive_system}
  \diag(H)=((M-H)\odot (M-H))\lambda,
\end{equation}
where $A\odot B$ denotes the Hadamard (elementwise) product of two matrices,
$(A\odot B)_{ij}=A_{ij}B_{ij}$. Provided that a solution $\lambda$ to the linear
system in~\cref{eq:fejive_system} exists (a sufficient, but not a necessary,
condition is that $(M-H)\odot (M-H)$ is invertible) we therefore obtain the
solution
\begin{equation*}
  G_{FEJIV}=H-(M-H)\diag(\lambda)(M-H),
\end{equation*}
which corresponds precisely to the FEJIV estimator of \textcite{csw23}. Because
the minimization imposes additional constraints, it follows that when $\nu_{i}$
is homoskedastic,
$E[(G_{UJIVE}x-\tilde{\ell})'(G_{UJIVE}x-\tilde{\ell})]\leq
E[(G_{FEJIV}x-\tilde{\ell})'(G_{FEJIV}x-\tilde{\ell})]$. Since the linear system
in \cref{eq:fejive_system} has dimension $n$, solving it may be challenging in
large datasets (if $n$ is in the tens or hundreds of thousands, which is common
in leniency applications). However, the additional constraint $W'G=0$ ensures that
the FEJIV estimator is invariant to adding linear functions of the covariates to
the outcome equation: this desirable property is not shared by UJIVE\@.

\subsection*{Heterogeneous Effects}

We now relax the assumption that treatment effects are constant and give a
simple statement of the local average treatment effect theorem of
\textcite{ImAn94}. As before, assume that there are $K$ examiners, with $k(i)$
denoting the examiner assigned to observation $i$, and $z_{i}$ denoting a
vector of indicators for examiner assignment. In a departure from the previous
section, we no longer condition on the instruments and covariates, but treat the
sample $\{y_i,x_i,z_i,w_i\}_{i=1}^n$ as \emph{iid}. Let
$\ddot{\ell}_{i}=\ddot{z}_{i}'\pi$ denote the population version of the relative
leniency, where $\ddot{z}_{i}$ is the population residual from regressing
$z_{i}$ onto $w_{i}$ (while the relative leniency measure
$\tilde{\ell}_{i}=\tilde{z}_{i}'\pi$ uses the sample residual, $\ddot{\ell}_{i}$
uses the population residual). Since $\ddot{z}_{i}$ depends only on $w_{i}$ and
the examiner assignment, we may write $\ddot{\ell}_{i}=\ddot{\ell}(k(i),w_{i})$.
We assume that the conditional mean of $z_{i}$ given $w_{i}$ is linear, so that
$E[\ddot{z}_{i}\mid w_{i}]=0$. However, we relax the assumption that the
first-stage is linear, so that $\ddot{\ell}(k, w_{i})$ is an approximation to,
but does not necessarily equal, the true relative leniency
$\ell^{*}(k,w_{i})-E[\ell^{*}(k(i),w_{i})]$, where $\ell^{*}(k,w)=E[x_{i}\mid
k(i)=k, w_{i}=w]$ denotes the true absolute leniency.

Using $\ddot{\ell}_{i}$ as an instrument in a population IV regression of
$y_{i}$ onto $x_{i}$ without any controls identifies
\begin{equation*}
  \beta^{*}=\frac{E[\ddot{\ell}_{i} y_{i}]}{E[\ddot{\ell}_{i} x_{i}]}.
\end{equation*}
The next result shows that $\beta^{*}$ can be written as a weighted average of simple IV
regressions restricted to the subpopulation with $w_{i}=w$ and $k(i)=k$ or
$k(i)=j$. These simple IV regressions identify
\begin{equation*}
  \beta(k,j,w)=\frac{E[y_{i}\mid k(i)=k,w_{i}=w]-E[y_{i}\mid k(i)=j,w_{i}=w]}{
    E[x_{i}\mid k(i)=k,w_{i}=w]-E[x_{i}\mid k(i)=j,w_{i}=w]}.
\end{equation*}

\begin{lemma}\label{theorem:leniency_iv_pairwise}
  Suppose that $E[\ddot{z}_{i}\mid w_{i}]=0$. Then leniency IV estimand
  $\beta^*$ may be written as a weighted average of pairwise IV regressions,
  \begin{equation}\label{eq:leniency_pairwise}
    \beta^*
    =\frac{\sum_{k>j}E[\omega(k,j,w_{i})\beta(k,j,w_{i})]}{
      \sum_{k>j}E[\omega(k,j,w_{i})]
    },
    \quad \omega(j,k,w)=
    p_{j}(w)p_{k}(w)(\ddot{\ell}(k,w)-\ddot{\ell}(j,w))(\ell^{*}(k,w)-\ell^{*}(j,w)).
  \end{equation}
\end{lemma}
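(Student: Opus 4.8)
The plan is to condition on $w_i$, reduce to a within-stratum computation, and then recognize the numerator and denominator of $\beta^{*}$ as a conditional covariance and variance that each admit a pairwise representation. Write $\bar\ell(w)=\sum_j p_j(w)\ell(j,w)=E[x_i\mid w_i=w]$, so that $\ddot\ell_i=\ell(k(i),w_i)-\bar\ell(w_i)$, and abbreviate $m(k,w)=E[y_i\mid k(i)=k,w_i=w]$. By the law of iterated expectations I would compute $E[\ddot\ell_i x_i]$ and $E[\ddot\ell_i y_i]$ by first conditioning on $w_i=w$ and $k(i)=k$ and then averaging over $w_i$; it therefore suffices to prove, for each fixed $w$,
\[
  E[\ddot\ell_i x_i\mid w_i=w]=\sum_{k>j}\omega(j,k,w),\qquad
  E[\ddot\ell_i y_i\mid w_i=w]=\sum_{k>j}\omega(j,k,w)\,\beta(k,j,w),
\]
since dividing the $w$-average of the second display by the $w$-average of the first yields exactly \eqref{eq:leniency_pairwise}.

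First I would evaluate the two within-stratum expectations in closed form. Conditioning on $k(i)=k$ gives $E[x_i\mid w_i=w,k(i)=k]=\ell(k,w)$ and $E[y_i\mid w_i=w,k(i)=k]=m(k,w)$, so
\[
  E[\ddot\ell_i x_i\mid w_i=w]=\sum_k p_k(w)\bigl(\ell(k,w)-\bar\ell(w)\bigr)\ell(k,w),\qquad
  E[\ddot\ell_i y_i\mid w_i=w]=\sum_k p_k(w)\bigl(\ell(k,w)-\bar\ell(w)\bigr)m(k,w).
\]
A direct expansion, using $\sum_k p_k(w)\ell(k,w)=\bar\ell(w)$, identifies the first expression as the variance $\var_k(\ell(k,w))$ and the second as the covariance $\cov_k(\ell(k,w),m(k,w))$ of absolute leniency and mean outcome under the assignment distribution $p_k(w)$.

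The main, and essentially only nonroutine, step is then to rewrite this variance and covariance as sums over examiner pairs. I would invoke the classical identities $\var(\ell)=\sum_{k>j}p_jp_k(\ell_k-\ell_j)^2$ and $\cov(\ell,m)=\sum_{k>j}p_jp_k(\ell_k-\ell_j)(m_k-m_j)$, both of which follow from the fact that, for independent copies of the assignment-induced pair, $\tfrac12 E[(\ell-\ell')(m-m')]=\cov(\ell,m)$: expanding $\tfrac12\sum_{j,k}p_jp_k(\ell_k-\ell_j)(m_k-m_j)$, the diagonal terms vanish and the off-diagonal terms pair symmetrically into $2\sum_{k>j}$. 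The variance identity gives $\var_k(\ell(k,w))=\sum_{k>j}\omega(j,k,w)$, establishing the first display. For the second, I would use that $\beta$ is symmetric in its first two arguments, so that $\omega(j,k,w)\beta(k,j,w)=p_j(w)p_k(w)(\ell(k,w)-\ell(j,w))(m(k,w)-m(j,w))$ once the leniency-difference factors cancel; the covariance identity then matches this to $\cov_k(\ell(k,w),m(k,w))$, establishing the second display.

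Averaging both displays over $w_i$ and taking the ratio completes the argument. I anticipate no serious obstacle beyond carefully bookkeeping the diagonal-vanishing and the symmetric off-diagonal pairing in the double sums, together with checking the sign convention that makes $\beta$ symmetric in its first two arguments.
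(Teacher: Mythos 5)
Your proof is correct and follows essentially the same route as the paper's: both reduce $\beta^*$ to the ratio $E[\ddot\ell_i y_i]/E[\ddot\ell_i x_i]$ via iterated expectations and then convert each expectation into a sum over examiner pairs. The paper does this by substituting the definition of $\ddot\ell_i$, using $E[\ddot\ell_i\mid w_i]=0$, and splitting the resulting double sum into $k>j$ and $k<j$; your packaging of that same double-sum manipulation as the pairwise-difference identities for $\var_k(\ell(k,w))$ and $\cov_k(\ell(k,w),m(k,w))$ (after cancelling one leniency-difference factor in $\omega(j,k,w)\beta(k,j,w)$) is a clean, equivalent reorganization.
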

\begin{proof}
  Recall the identity $E[(A_{i}-A_{i'})(B_{i}-B_{i'})]=2\cov(A_{i},B_{i})$,
  where $(A_{i},B_{i})$ is a pair of random variables, and $(A_{i'},B_{i'})$ is
  an independent copy. Let
  $\beta_{N}(k, j, w)=E[y_{i}\mid k(i)=k, w_{i}=w]-E[y_{i}\mid k(i)=j, w_{i}=w]$
  denote the numerator of $\beta(j, k, w)$. By iterated expectations,
  $E[\ddot{\ell}_{i}y_{i}]=E[\cov(\ddot{\ell}_{i},y_{i})\mid w_{i}]$, so by the covariance
  identity,
  \begin{equation*}
    \begin{split}
      E[\ddot{\ell}_{i}y_{i}]
      &=\frac{1}{2}E[E[(\ddot{\ell}_{i}-\ddot{\ell}_{i'})(y_{i}-y_{i'})\mid w_{i}=w_{i'}]]\\
      &=\frac{1}{2}E[E[(\ddot{\ell}(k(i),w_{i})-\ddot{\ell}(k(i'),w_{i}))\beta_{N}(k(i),k(i'),w_{i})\mid w_{i}=w_{i'}]]\\
      &=\frac{1}{2}\sum_{k,j}E[p_{j}(w_{i})p_{k}(w_{i})(\ddot{\ell}(k,w_{i})-\ddot{\ell}(j,w_{i}))\beta_{N}(k,j,w_{i})]\\
      &=\sum_{k>j}E[p_{j}(w_{i})p_{k}(w_{i})(\ddot{\ell}(k,w_{i})-\ddot{\ell}(j,w_{i}))\beta_{N}(k,j,w_{i})].
    \end{split}
  \end{equation*}
  An analogous argument applied to the denominator yields the result.
\end{proof}
The result shows that leniency IV identifies a weighted average of simple IV
regressions, where we restrict the sample to observations assigned to one of
two examiners in a given pair $(j,k)$, and restrict the covariates to equal a
particular value. If the first stage is correctly specified,
$\ddot{\ell}(k,w)-\ddot{\ell}(j,w)=\ell^{*}(k,w)-\ell^{*}(j,w)$, in which case
the weights are proportional to the product of the conditional assignment
probabilities to $j$ and $k$ times the squared differences in examiner leniency.

\Cref{theorem:leniency_iv_pairwise} is an algebraic result, and holds without
any assumptions on the model. To give a causal interpretation to $\beta^*$, we
need a causal interpretation for $\beta(k, j, w)$. To this end, if the
instruments are as-good-as-randomly assigned conditional on $w_i$, and an
exclusion restriction holds (but we do not impose monotonicity), Proposition 3
in \textcite{AnImRu96} shows that $\beta(k, j, w)$ is given by a weighted
difference between treatment effects for compliers (those who are treated when
assigned to $k$ but not when assigned to $j$), and defiers (those who are
treated when assigned to $j$ but not when assigned to $k$). The weight on
defiers is negative and equal to the proportion of defiers divided by the
leniency difference, while the weight on compliers is positive and equals the
proportion of compliers divided by the leniency difference:
\begin{equation}\label{eq:late}
\beta(k, j, w)=\frac{E[\beta_i C_i(k, j, w)]}{\ell^{*}(k,w)-\ell^{*}(j,w)},
\end{equation}
where $\beta_i$ is the individual treatment effect, and $C_i(k, j, w)$ denotes a
variable that equals $1$ if an observation is a complier and $-1$ if they are a
defier, so that $\ell^{*}(k,w)-\ell^{*}(j,w)=E[C_i(k, j, w)]$. Under monotonicity, there
are no defiers, so that $\beta(k, j, w)$ corresponds to the average treatment
effect for compliers. Combining this interpretation with
\Cref{theorem:leniency_iv_pairwise} then yields the version of the local average
treatment effect theorem we referred to in the main text.

The next result generalizes the local average treatment effect theorem to allow
for the presence of defiers. To state the result, let $x_i(k)$ denote the
potential treatment indicator that equals $1$ if examiner $k$ would treat
observation $i$, so that $C_i(k, j, w)=x_i(k)-x_i(j)$.
\begin{lemma}\label{lemma:lambda_weights}
  Suppose that $E[\ddot{z}_{i}\mid w_{i}]=0$ and that~\cref{eq:late} holds. Then
  $\beta^{*}=E[\lambda_{i}\beta_{i}]/E[\lambda_{i}]$, where
  \begin{equation*}
    \lambda_{i}=\cov(\ddot{\ell}(k(i), w_i), x_i(k(i))\mid i)=(\overbar{\ell}_{i}(1)-\overbar{\ell}_i(0))(1-\overbar{x}_{i})\overbar{x}_{i}.
  \end{equation*}
  Here $\overbar{x}_{i}=\sum_{k}p_{k}(w_{i})x_{i}(k)$ is the average treatment
  rate of unit $i$,
  $\overbar{\ell}_{i}(1)=\sum_{k}p_{k}(w_{i})\ddot{\ell}(k,w_{i})x_{i}(k)/\overbar{x}_i$
  is the average relative leniency of examiners who would treat $i$, and
  $\overbar{\ell}_{i}(0)=\sum_{k}p_{k}(w_{i})\ddot{\ell}(k,w_{i})(1-x_{i}(k))/(1-\overbar{x}_{i})$
  is the average leniency of those who wouldn't.
\end{lemma}
\begin{proof}
 By \Cref{theorem:leniency_iv_pairwise}, \cref{eq:late} and iterated
 expectations,
 \begin{equation*}
   E[\ddot{\ell}_{i}y_{i}]=
   \sum_{k>j}E[\omega(k,j,w_{i})\beta(k, j, w_{i})]
   =   \sum_{k>j}E[
   p_{j}(w_{i})p_{k}(w_{i})(\ddot{\ell}(k, w_{i})-\ddot{\ell}(j, w_{i}))
   (x_{i}(k)-x_{i}(j))\beta_{i}
   ]= E[\lambda_{i}\beta_{i}],
 \end{equation*}
 where
 $\lambda_{i}=\sum_{k> j}p_{j}(w_{i})p_{k}(w_{i})(\ddot{\ell}(k,w_{i})-\ddot{\ell}(j,w_{i}))
 (x_{i}(k)-x_{i}(j))$, while  $E[\ddot{\ell}_{i}x_{i}]=E[\lambda_{i}]$. Using the definitions of $\overbar{\ell}_{i}(1),\overbar{\ell}_{i}(0)$ and $\overbar{x}_i$, we have $\sum_{j}\ddot{\ell}(j,w_{i})p_{j}(w_{i})=\overbar{x}_i\overbar{\ell}_{i}(1)+(1-\overbar{x}_i)\overbar{\ell}_{i}(0)$, so that
 \begin{equation*}
   \begin{split}
     \lambda_{i}&=\frac{1}{2}\sum_{k, j}p_{j}(w_{i})p_{k}(w_{i})(\ddot{\ell}(k,w_{i})-\ddot{\ell}(j,w_{i}))(x_{i}(k)-x_{i}(j))\\
                &=\sum_{k, j}p_{j}(w_{i})p_{k}(w_{i})(\ddot{\ell}(k,w_{i})-\ddot{\ell}(j,w_{i}))x_{i}(k)\\
                &=\sum_{k}p_{k}(w_{i})\ddot{\ell}(k,w_{i})x_{i}(k)
                  -\sum_{j}\ddot{\ell}(j,w_{i})p_{j}(w_{i})\sum_{k}p_{k}(w_{i})x_{i}(k)\\
                &=\overbar{x}_{i}\overbar{\ell}_{i}(1)
                  -[\overbar{x}_{i}\overbar{\ell}_{i}(1)+(1-\overbar{x}_{i})\overbar{\ell}_i(0)]\overbar{x}_{i}.
   \end{split}
 \end{equation*}
 Note that by the covariance identity used in the proof of \Cref{theorem:leniency_iv_pairwise}, $\lambda_i=\cov(\ddot{\ell}(k(i), w_i), x_i(k(i))\mid i)$, which yields the result.
\end{proof}

\Cref{lemma:lambda_weights} shows that $\beta^*$ identifies a weighted average of individual treatment effects, with weights that are zero for non-responders---individuals whose treatment status doesn't depend on examiner assignment (for whom  $(1-\overbar{x}_{i})\overbar{x}_{i}=0$). A necessary and sufficient condition for the weights on the remaining individuals to be positive is that the relative leniency measure of examiners who would treat $i$ exceeds that of those who wouldn't, 
$\overbar{\ell}_{i}(1)\geq \overbar{\ell}_{i}(0)$. If the only covariate is an intercept, $w_i=1$, and the first-stage is correctly specified, the condition $\overbar{\ell}_{i}(1)\geq \overbar{\ell}_{i}(0)$ is equivalent to the average monotonicity condition in \textcite{frandsen2023judging},
$\cov(x_{i}(k(i)), \ell(k(i),1)\mid i)\geq 0$. This equivalence was noted in \textcite{sigstad_monotonicity_2025}. 
\Cref{lemma:lambda_weights} generalizes the setup in \textcite{frandsen2023judging} by allowing for covariates and
for the first-stage equation to be misspecified. Under misspecification, it may be the case that the average monotonicity 
condition doesn't hold, even if it does hold for the true relative leniency, $\ell^{*}(k,w_{i})-E[\ell^{*}(k(i),w_{i})]$.

\subsection*{Inference}

To derive a standard error for UJIVE that allows for treatment effect
heterogeneity, let
\begin{equation}\label{eq:rf}
  y_i = z_i'\pi_{Y} + w_i'\delta_{Y} + \nu_{Yi},
  \qquad E[\nu_{Yi}]=0
\end{equation}
denote the \emph{reduced form} regression of the outcome on instruments and
covariates. To avoid technical complications, as in the estimation section of the Appendix, we
condition on the instruments and covariates, and in analogy
to~\cref{eq:fs_correct}, we assume that the reduced form is correctly specified.
The parameter of interest is given by the estimand of the infeasible estimator
$\hat{\beta}^{*}$,
\begin{equation*}
  \beta^{*}=\frac{\sum_{i}E[\tilde{\ell}_{i}y_{i}]}{\sum_{i}E[\tilde{\ell}_{i}x_{i}]}
  =\frac{\sum_{i}\tilde{\ell}_{i}z_{i}\pi_{Y}}{\sum_{i}\tilde{\ell}_{i}^{2}}.
\end{equation*}
If treatment effects are constant, so that \cref{eq:structural_linear} holds,
then it follows from the outcome equation and~\cref{eq:rf} that
$\pi_{Y}=\pi\beta$, and $\beta^{*}=\beta$. Recall that the UJIVE estimator is
given by $\hat{\beta}_{UJIVE}=y'G_{UJIVE}x/x'G_{UJIVE}x$. Since
$G_{UJIVE}x=\tilde{\ell}+G_{UJIVE}\nu$, it follows from the first stage equation and \cref{eq:rf} that we may
decompose the numerator and denominator of the estimator as
\begin{equation*}
  \begin{pmatrix}
    y'G_{UJIVE}x\\
    x'G_{UJIVE}x
  \end{pmatrix}=
  \sum_{i}  \begin{pmatrix}
    \tilde{\ell}_{i}y_{i}\\
    \tilde{\ell}_{i}x_{i}
  \end{pmatrix} + 
  \sum_{i}\begin{pmatrix}
    r_{Y, i}\nu_{i}\\
    r_{i}\nu_{i}
  \end{pmatrix}+
  \sum_{i,j}\begin{pmatrix}
    G_{ij}\nu_{Y, i}\nu_{j}\\
    G_{ij}\nu_{i}\nu_{j}\\
  \end{pmatrix},
\end{equation*}
where $G_{ij}$ is the $(i, j)$ element of $G_{UJIVE}$, while
$r_{Y, i}=\sum_{j}G_{ji}(z_{j}'\pi_{Y} + w_{j}'\delta_{Y})$ and
$r_{i}=\sum_{j}G_{ji}(z_{j}'\pi + w_{j}'\delta)$ denote the ``signal'' in the
reduced form and the first stage. The second and third terms represent additional
noise components in the estimator relative to the infeasible estimator
$\hat{\beta}^{*}$. Since the third term is uncorrelated with the first two, it
follows that the covariance matrix of the left-hand side may be written
\begin{equation*}
  \Sigma=
  \sum_{i}\var\begin{pmatrix}
    \tilde{\ell}_{i}y_{i}+r_{Y, i}\nu_{i}\\
    \tilde{\ell}_{i}x_{i}+r_{i}\nu_{i}
  \end{pmatrix}+\sum_{i,j}
  \begin{pmatrix}
    G_{ij}^{2}\sigma^{2}_{y, i}\sigma^{2}_{x, j}+G_{ij}G_{ji}\sigma_{xy,i}\sigma_{xy,j}&    G_{ij}(G_{ij}+G_{ji})\sigma_{yx, i}\sigma^{2}_{x, j}\\
    G_{ij}(G_{ij}+G_{ji})\sigma_{yx, i}\sigma^{2}_{x, j}&G_{ij}(G_{ij}+G_{ji})\sigma^{2}_{x, i}\sigma^{2}_{x, j}
  \end{pmatrix}
\end{equation*}
where $\sigma^{2}_{y,i}=\var(\nu_{Y, i})$, $\sigma^{2}_{x,i}=\var(\nu_{i})$, and
$\sigma_{yx,i}=\cov(\nu_{Y, i}, \nu_{i})$. Furthermore, the numerator and
denominator can be shown to be asymptotically normal by a martingale central
limit theorem \parencite[e.g.,][Lemma D.5]{ek17} so that
\begin{equation}\label{eq:rf-normal}
  \begin{pmatrix}
    y'G_{UJIVE}x\\
    x'G_{UJIVE}x
  \end{pmatrix}\approx \mathcal{N}\left(
    \begin{pmatrix}
      \beta^{*}\sum_{i}\tilde{\ell}_{i}^{2}\\
      \sum_{i}\tilde{\ell}_{i}^{2}\\
    \end{pmatrix}
    , \Sigma\right).
\end{equation}
It then follows by an application of the delta method that in large samples,
$\hat{\beta}_{UJIVE}$ has an approximately normal distribution,
\begin{equation*}
  \hat{\beta}_{UJIVE}\approx \mathcal{N}(\beta^{*}, \sigma^{2}_{UJIVE}),
\end{equation*}
where
\begin{equation*}
\sigma^{2}_{UJIVE}=\frac{
\sum_{i}\var(\tilde{\ell}_{i}(y_{i}-x_{i}\beta^{*})+(r_{Y,i}-r_{i}\beta^*)\nu_{i})+
\sum_{i,j}(G_{ij}^{2}\sigma^{2}_{y-x\beta^{*}, i}\sigma^{2}_{x, j}+
G_{ij}G_{ji}\sigma_{y-x\beta^{*},x, i}\sigma_{y-x\beta^{*},x, j})
}{(\sum_{i}\tilde{\ell}_{i}^{2})^2}
\end{equation*}
In contrast, by the same arguments, the standard error for the infeasible
estimator is given by the square root of
$\sigma^{2}_{*}=\sum_{i}\var(\tilde{\ell}_{i}(y_{i}-x_{i}\beta^{*}))/(\sum_{i}\tilde{\ell}_{i}^{2})^2$.

The $(r_{Y,i}-r_{i}\beta^*)\nu_{i}$ term in the numerator arises due to variation in
complier treatment effects across complier groups. In particular, under
regularity conditions,
$\var((r_{Y,i}-r_{i}\beta^*)\nu_{i})\approx
\var(\tilde{z}_{i}'(\pi_{Y}-\pi\beta^{*})\nu_{i})$. Under constant treatment
effects, the reduced form coefficients are proportional to the first stage,
$\pi_{Y}=\pi\beta$, so that the first term in the above display may be replaced
by $\sum_{i}\var(\tilde{\ell}_{i}(y_{i}-x_{i}\beta))$, or equivalently
$\sum_{i}\var(\varepsilon_{i}\tilde{\ell}_{i})$. The last term in the numerator
is the \textcite{bekker94} many instrument term. It is negligible if the
instruments are strong enough in the sense that $E[F]$ is large.

Note that \Cref{eq:rf-normal} corresponds exactly to Equation~(3) in
\textcite{AngKo24} who consider an instrumental variables regression with a
single instrument, with the variation in the first stage leniency,
$\sum_{i}\tilde{\ell}_{i}^2$ playing the role of the first-stage regression
coefficient on the single instrument, and $\Sigma$ playing the role of the
covariance between the reduced form and the first stage. It then
follows from their analysis that delta-method based inference is not overly
optimistic even if the instruments are weak, provided that the correlation
between $(y-x\beta^*)'G_{UJIVE}x$ and $x'G_{UJIVE}x$, given by
\begin{equation*}
  \rho=\frac{\Sigma_{12}-\Sigma_{22}\beta^{*}}{\sqrt{\Sigma_{22}(\Sigma_{11}-2\beta^{*}\Sigma_{12}+{\beta^{*}}^{2}\Sigma_{22})}}
\end{equation*}
is not too large: as long as $|\rho|<0.76$, rejection rates for nominal 5\%
tests stay below 10\% regardless of the instrument strength. In contrast to the
single-instrument case, $\Sigma$ now contains additional terms relative to the
infeasible estimator, so that $\rho$ no longer maps to the
endogeneity parameter under homoskedasticity. Nonetheless, since $\Sigma$ is
consistently estimable, for any particular null hypothesis of interest, one can
plug in the hypothesized value of $\beta^{*}$ into the above expression along
with estimates of $\Sigma$ to verify that using the UJIVE standard errors
doesn't lead to overrejection. Alternatively, since $\rho$ is monotone in
$\beta^{*}$, bounding the treatment effect parameter yields bounds in the
plausible values of $\rho$: if these exclude large values of $\rho$, confidence
intervals based on UJIVE standard errors will have good coverage rates.

\section{Data Appendix\label{sec:data}}
The data in our application comes from \textcite{FarreMensaHegdeLjungqvist2020data}, the replication file for \textcite{farre2020patent}. Our reanalysis sample contains 32,514 first-time patent applications filed after 2001 with final decisions by 2013.\footnote{\label{fn:algorithm} Our analysis uses the full sample from the original paper as found in its replication package. The initial dataset has 34,435 observations. We drop 1 observation that is missing citation data. We then drop 1,851 observations with singleton covariates or instruments, and 69 observations with leverage of 1. This causes us to drop 378 collinear controls and 1,676 collinear IVs. See the code documentation  at \href{https://github.com/kolesarm/ManyIV}{\texttt{https://github.com/kolesarm/ManyIV}} for discussion on how this recursive algorithm is implemented to drop these observations and variables.}

The variables used in the analyses (and listed in \Cref{tab:summary_stats}) are defined as follows. \textit{\# subsequent applications} is the number of applications with a filing date greater than the first-action date of a firm's first application and \textit{any subsequent  application} equals one if any applications were made. \textit{\# subsequent approved applications}  is the number of approved applications with a filing date greater than the first-action date of a firm's first application and \textit{any subsequent approved application} equals one if any applications were approved. \textit{\# citations to subsequent patents} is the  number of citations received by all subsequent patent applications over the five years after each patent application's public disclosure date, and \textit{any citation to subsequent patents} equals one if any such citation was received.

Patent class groups are defined by related subject matter, as classified by the
US Patent Office (see
\url{https://www.uspto.gov/sites/default/files/patents/resources/classification/classescombined.pdf}).\\
\textit{Patent class group I} includes chemical and related arts, \textit{Patent
  class group II} includes communications, radiant energy, weapons, electrical,
and computer arts, and \textit{Patent class group III} includes material
science, mechanical manufacturing and power, and related arts. \textit{\#
  independent claims in application} counts independent claims made in the
patent application. \textit{\# VC rounds before application} is the number of
venture capital funding rounds the startup had secured prior to its first patent
application. \textit{\# startups in HQ state} counts the total number of startups
headquartered in the same state as the applicant in the same year as the
applicant. \textit{Approval by European Patent Office} and \textit{Approval by
  Japanese Patent Office} take value 1 if a patent was approved by a European
and Japanese patent office, respectively, take value 0 if an application was
made but not approved, and are missing if no application was made. \textit{Years
  from application to first action} counts the years between the patent
application and the first action on the patent application.

\begin{table}[tp]
\centering
\begin{threeparttable}
\footnotesize
\caption{Treatment Effect Estimates with Precision Controls}\label{tab:treat_effect_est_w_controls}
\begin{tabular}{l c c c c}
\toprule
 & UJIVE & \multicolumn{2}{c}{2SLS} & \multirow{3}{*}{OLS} \\
\cmidrule(lr){2-2}\cmidrule(lr){3-4} & Examiner & FMHL & Examiner  \\
 & indicators & approval rate & indicators & \\
 & (1) & (2) & (3) & (4) \\
\midrule
Any subsequent application & 0.203 & 0.278 & 0.250 & 0.250 \\
 & (0.064) & (0.025) & (0.016) & (0.006) \\
Log(1 + \# subsequent applications) & 0.401 & 0.478 & 0.413 & 0.385 \\
 & (0.116) & (0.039) & (0.029) & (0.010)  \\
Any subsequent approved application & 0.275 & 0.254 & 0.248 & 0.231 \\
 & (0.059) & (0.022) & (0.015) & (0.005) \\
Log(1 + \# subsequent approved applications) & 0.403 & 0.370 & 0.343 & 0.304 \\
 & (0.095) & (0.031) & (0.023) & (0.008)  \\
Any citation to subsequent patents & 0.216 & 0.220 & 0.187 & 0.176\\
 & (0.057) & (0.021) & (0.014) & (0.005)  \\
Log(1 + \# citations to subsequent patents) & 0.514 & 0.499 & 0.410 & 0.363\\
 & (0.146) & (0.047) & (0.035) & (0.012)  \\
\bottomrule
\end{tabular}
    Notes: This table reports estimates of the effect of application approval on each outcome in Panel A of \Cref{tab:summary_stats}. This table adds additional precision controls:  HQ state fixed effect; the number of independent claims in the application, along with a missing-value dummy, with the claims count set to 0 when missing and the dummy flags those observations; the number of VC rounds before the application, and the number of startups in the HQ state. Column 1 estimates effects with UJIVE, instrumenting with examiner indicators. Columns 2 and 3 instead estimate effects with 2SLS, instrumenting with examiners' approval rate as computed by \textcite{farre2020patent} and examiner indicators respectively. Column 4 estimates effects by ordinary least squares. All estimates control for art unit-by-year fixed effects. The sample size is 32,514 in all specifications. Standard errors, reported in parentheses, are robust to heteroskedasticity and treatment effect heterogeneity.
\end{threeparttable}
\end{table}

\begin{table}[tp]
\centering
\begin{threeparttable}
\footnotesize
\caption{Complier Characteristics with Precision Controls}\label{tab:complier_char_w_controls}
\begin{tabular}{l c c c}
\toprule
 & Sample  & Complier & \multirow{2}{*}{\# Obs.} \\
 & Mean & Mean &\\
 & (1) & (2) & (3) \\
\midrule
Patent class group I & 0.162 & 0.215 & 32,514\\
 &  & (0.037) & \\
Patent class group II & 0.410 & 0.416 & 32,514\\
 &  & (0.044) & \\
Patent class group III & 0.445 & 0.375 & 32,514\\
 &  & (0.042) & \\
\# independent claims in application & 3.730 & 3.622 & 27,226\\
 &  & (0.223) & \\
\# VC rounds before application & 0.124 & 0.178 & 32,514\\
 &  & (0.047) & \\
\# startups in HQ state & 317.3 & 343.0 & 32,514\\
 &  & (25.5) & \\
Approval by European Patent Office & 0.372 & 0.142 & 3,287\\
 &  & (0.111) & \\
Approval by Japanese Patent Office & 0.398 & 0.601 & 1,197\\
 &  & (0.479) & \\
\bottomrule
\end{tabular}
    Notes: This table reports means of each covariate in Panel C of \Cref{tab:summary_stats} for the full sample and for compliers. This table adds additional precision controls:  HQ state fixed effect; the number of independent claims in the application, along with a missing-value dummy, with claims count is set to 0 when missing and the dummy flags those observations; the number of VC rounds before the application, and the number of startups in the HQ state.  Complier means are estimated with UJIVE as described in the text. Standard errors, reported in parentheses, are robust to heteroskedasticity and treatment effect heterogeneity.
\end{threeparttable}
\end{table}

\end{appendices}

\end{document}